\numberwithin{equation}{section}
\newtheorem{thm}{Theorem}[section]
\newtheorem{pro}[thm]{Proposition}
\newtheorem{lem}[thm]{Lemma}
\newtheorem{cor}[thm]{Corollary}
\newtheorem{prb}[thm]{Problem}
\theoremstyle{definition}
\newtheorem{dfn}[thm]{Definition}
\newtheorem{rem}[thm]{Remark}
\newtheorem{exem}[thm]{Example}
\theoremstyle{plain}
\DeclareMathOperator{\supp}{supp}
\DeclareMathOperator{\charac}{char}
\DeclareMathOperator{\ord}{ord}
\DeclareMathOperator{\wt}{wt}
\DeclareMathOperator{\Z}{Z}
\newcommand{\Zz}{\mathbb{Z}}
\newcommand{\Cc}{\mathbb{C}}
\begin{document}
\title{Good cyclic codes and the uncertainty principle}
\author{Shai Evra,  Emmanuel Kowalski,  Alexander Lubotzky}
\maketitle

\begin{abstract}
  A long standing problem in the area of error correcting codes asks
  whether there exist good cyclic codes.  Most of the known results
  point in the direction of a negative answer.

  The uncertainty principle is a classical result of harmonic analysis
  asserting that given a non-zero function $f$ on some abelian group,
  either $f$ or its Fourier transform $\hat{f}$ has large support.

  In this note, we observe a connection between these two subjects.
  We point out that even a weak version of the uncertainty principle
  for fields of positive characteristic would imply that good cyclic
  codes do exist.  We also provide some heuristic arguments supporting
  that this is indeed the case.
\end{abstract}

\section{Introduction} \label{s-1}

Let $F$ be a field. Given integers $n$, $k$ and $d$ with
$1\leq k\leq n$, an $[n,k,d]_F$-code, or code over $F$, is a subspace
$C$ of $F^n$ of dimension $\dim_F(C)=k$, such that for every
$0 \ne \alpha \in C$, we have $\wt(\alpha)\geq d$, where the
\emph{weight} $\wt(\alpha)$ of a vector
$\alpha=(a_0,\ldots,a_{n-1})\in F^n$ is the number of non-zero
components $a_i$.  The integer $d$ is called the \emph{distance} of
the code $C$.
\par
Furthermore, a code $C$ is called \textit{cyclic} if it is invariant
under cyclic permutations of the coordinates, i.e.  if
$$
(a_0,\ldots,a_{n-1})\in C \Leftrightarrow
(a_{n-1},a_0,\ldots,a_{n-2})\in C
$$ 
(see~\cite[Ch. 8]{roth}).

The code $C$, or more properly a family $(C_n)$ of codes in $F^n$
where $n\rightarrow\infty$, possibly along some subsequence of
positive integers, is called \textit{good} if there exists a constant
$c > 0$ such that
\begin{equation}\label{eq-good-code}
\frac{k}{n}\geq c,\quad\quad \frac{d}{n}\geq c
\end{equation}
for all $n$.

We are interested in the case of cyclic codes over a finite field $F$
with $\ell$ elements.  The practical interest of such codes goes back
at least to Brown and Peterson~\cite{BP} (e.g., they can be used to
efficiently detect so-called ``burst errors''). A long standing open
problem in the area of error correcting codes is whether, for a fixed
value of $\ell$, there exists an infinite sequence of good cyclic
codes. 

Most evidence, and maybe the prevailing opinion, goes towards the
non-existence of good cyclic codes. Indeed, it was proved by Berman
\cite{B} in 1967 that if $\{n\}$ ranges over integers whose prime factors are 
bounded, and these factors are coprime to the characteristic of the 
underlying field $\mathbb{F}_{\ell}$,
then no sequence of cyclic codes of lengths $\{n\}$, is good. Babai, Shpilka and
Stefankovic \cite{BSS} proved that this is also the case if $n$ ranges
over integers such that the primes $p$ dividing $n$ all satisfy
$p\leq n^{\frac{1}{2} - \epsilon}$ for some fixed constant
$\epsilon>0$.  Furthermore, they also showed that there are no good
cyclic codes that are either locally testable or LDPC (``low density
parity check'') codes.
We refer to the book~\cite{mws} of MacWilliams and Sloane and to the
textbook of Roth~\cite{roth} for basic terminology and concepts in
coding theory.

On the other hand, the uncertainty principle is a classical result of
harmonic analysis, which in one form asserts that given a function
$f$, either $f$ or its Fourier transform $\hat{f}$ has large
support. Many variants exist, and we refer to Folland and
Sitaram~\cite{FS} for a survey of the continuous setting. We will
consider the version of the uncertainly principle where
$f:A\rightarrow\mathbb{C}$ is a complex valued function on a finite
group $A$, and even more particularly, when $A$ is the cyclic group
$\mathbb{Z}/p\mathbb{Z}$ of prime order $p$.  In this case, the
uncertainty principle states that for $f \not=0$, we have
\begin{equation} \label{up}
|\supp(f)| + |\supp(\hat{f})| \geq p + 1,
\end{equation}
where $\supp(g)$ is the support of a function (see Meshulam~\cite{M1},
Goldstein, Guralnick, Isaacs~\cite{GGI}, Tao~\cite{T} or \S \ref{s-3}
below).

One can formulate the uncertainty principle for functions from
$A=\mathbb{Z}/p\mathbb{Z}$ to any algebraically closed field $F$ (see
Section~\ref{s-3}). The case of interest to us is when $F$ has
positive characteristic $\ell$, in particular when $\ell=2$. The
inequality (\ref{up}) does not hold in general in this case (see \S
\ref{s-4} below), but we will give some heuristic argument suggesting
that some weaker version may still hold.

We will then show that even a much weaker version of the inequality
(\ref{up}) for $F=\bar{\mathbb{F}}_2$ would suffice to imply the
\underline{existence} of good cyclic codes.  This should come as quite
a surprise, as it goes against the common wisdom in the theory of
error correcting codes.

\subsection*{Acknowledgements}

The authors are grateful to E. Ben-Sasson, B. Poonen, P. Sarnak and
M. Sudan for discussions and suggestions, many of which have been
incorporated into the text.  We thanl F. Voloch for pointing out his
note~\cite{V}. We acknowledge support by the ERC, NSF, ISF, Dr.~Max
R\"{o}ssler, the Walter Haefner Foundation and the ETH Foundation, and
the ETH Institute for Theoretical Studies. EK's work is partially
supported by an DFG-SNF lead agency program grant (grant
200021L\_153647).

\subsection{Organization of the paper}

This note is arranged as follows:

In $\S$ \ref{s-2}, we describe cyclic codes of length $n$ over the
prime field $\mathbb{F}_{\ell}$ of order $\ell$, as ideals in the
group algebra
$\mathbb{F}_{\ell}[\Zz/n\Zz] \cong \mathbb{F}_{\ell}[x]/(x^n-1)$.  We
then describe the structure and the ideals of
$\mathbb{F}_{\ell}[\Zz/p\Zz]$ when $n=p$ is a prime, and express the
dimension and the distance of such an ideal in terms of this data
(using in particular the multiplicative order of $\ell$ modulo $p$).

In $\S$ \ref{s-3}, we formulate the uncertainty principle for
functions $f:\Zz/p\Zz \rightarrow \mathbb{C}$.  To illustrate the
connection with cyclic codes, we show how this uncertainty principle
implies the existence of good cyclic codes over $\mathbb{C}$ -- the
examples we recover are the well-known Reed-Solomon codes over
$\mathbb{C}$.  This is of course not the end of the story, as one
wants such codes over finite fields.

In $\S$ \ref{s-4}, we formulate a few variants of the uncertainty
principle over various fields. We present a proof of the uncertainty
principle for any field of characteristic zero, following~\cite{GGI}.
Afterwards, we present some counter-examples to a naive generalization
of the uncertainty principle to finite fields.

In $\S$ \ref{s-5}, we propose a weaker version of uncertainty
principle, and show how this weaker version implies the existence of
good cyclic codes.  In $\S$ \ref{s-6}, we present some heuristics,
both for this weak uncertainty principle and for the existence of good
cyclic codes. 

We conclude with an Appendix that explains that the uncertainty
principle for $\Zz/p\Zz$ is equivalent to an old result of Chebotarev.


\section{Cyclic codes} \label{s-2}

\subsection{Introduction}

The following is a long standing open problem.

\begin{prb} \label{no-good} 
Are there good cyclic codes over a fixed finite field $F$?
\end{prb}

This was asked by MacWilliams and Sloane~\cite[Problem 9.2,
p. 270]{mws}.
See also \cite{MPW} who attribute the problem to \cite{AMS}. It seems that the common belief is that there are no 
such codes and there are a number of results in support of such a conjecture.

For instance, the most commonly used cyclic codes are the long BCH
codes (see~\cite[\S 5.6]{roth} for definition and background of BCH
codes), and Lin and Weldon \cite{LW} proved that these codes are not
good.

Partial results toward the conjecture were obtained by Berman \cite{B}
in 1967 and
by Babai, Shpilka and Stefankovic \cite{BSS} in 2005. We state their
results formally:

\begin{thm}[Berman]
  Let $F$ be a finite field of order $\ell$, and $(C_t)_t$ a family of
  $[n_t,k_t,d_t]_F$-cyclic codes such that there exists some real
  number $c>0$ with $\frac{k_t}{n_t} \geq c$ for all $t$.  Assume
  furthermore that there exists $\beta\geq 1$ such that all primes
  dividing $n_t$ are coprime to $\ell$ and at most $\beta$.  Then
  there exists an integer $m$, depending on $\ell$ and $\beta$, such
  that $d_t \leq m$.
  In particular, this family is not a good family of codes.
\end{thm}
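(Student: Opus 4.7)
The plan is to exploit the dictionary of Section~\ref{s-2}: a cyclic code $C_t$ of length $n_t$ over $\mathbb{F}_\ell$ is an ideal $\tilde{C}_t$ of $R_{n_t} := \mathbb{F}_\ell[x]/(x^{n_t}-1)$, generated by a divisor $g_t \mid x^{n_t}-1$, with $\dim C_t = n_t - \deg g_t$. Since $\gcd(n_t,\ell) = 1$, the algebra $R_{n_t}$ is semisimple, with Chinese remainder decomposition $R_{n_t} \cong \prod_{d \mid n_t} E_d$, where $E_d := \mathbb{F}_\ell[x]/\Phi_d(x)$ itself further decomposes into $\phi(d)/\ord_d(\ell)$ copies of $\mathbb{F}_{\ell^{\ord_d(\ell)}}$, corresponding to Galois orbits of characters of $\Zz/n_t\Zz$ of exact order $d$ under $\chi \mapsto \chi^\ell$. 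Since all prime divisors of $n_t$ lie in the finite set $P := \{p \leq \beta : p \neq \ell\}$, I would write $n_t = \prod_{p \in P} p^{a_{t,p}}$ and, by a diagonal extraction, assume each $a_{t,p}$ is either eventually constant or tends to $\infty$; at least one diverges since $n_t \to \infty$.

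The main construction, illustrated first in the single-prime case $n_t = p^{a_t}$, is the explicit factorisation
\[
\frac{x^{p^{a_t}}-1}{\Phi_{p^i}(x)} \;=\; \Bigl(\sum_{j=0}^{p^{a_t-i}-1} x^{j p^i}\Bigr) \cdot \bigl(x^{p^{i-1}}-1\bigr) \qquad (i \geq 1),
\]
a polynomial of Hamming weight exactly $2 p^{a_t - i}$ that generates the ideal $E_{p^i}$ inside $R_{p^{a_t}}$. So if $E_{p^i} \subseteq \tilde{C}_t$, then $\tilde{C}_t$ contains a codeword of weight at most $2 p^{a_t - i}$. Let $i^*_t$ be the largest $i$ with $E_{p^i} \subseteq \tilde{C}_t$. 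The dimension bound $\dim \tilde{C}_t \geq c n_t$, combined with the telescoping identity $\sum_{i \leq i^*_t} \phi(p^i) = p^{i^*_t}$, forces $p^{i^*_t} \geq c p^{a_t}$, hence $a_t - i^*_t \leq \log_p(1/c)$, giving $d_t \leq 2 p^{a_t - i^*_t} \leq 2/c$, a uniform bound depending on $p$ and $c$ only. For the multi-prime case I would iterate the argument prime-by-prime along the divergent primes in $P$, with weight inflation controlled by a product over the finitely many primes in $P$, depending only on $\ell$, $\beta$, and $c$.

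The principal obstacle is the possible reducibility of $\Phi_{p^i}$ over $\mathbb{F}_\ell$: when $\ord_{p^i}(\ell) < \phi(p^i)$, the component $E_{p^i}$ decomposes further into several minimal field summands, and $\tilde{C}_t$ may contain some but not all of them. The explicit generator above lies in the entire $E_{p^i}$ and not in any individual sub-field, so the clean ``contain-or-miss'' dichotomy breaks. Addressing this requires bounding the minimum weight of each minimal sub-field individually inside $R_{n_t}$, which in turn demands a finer analysis of the Galois-orbit structure of characters; the boundedness of $P$ ensures that the relevant orbit-size data $\ord_{p^i}(\ell)$ is uniformly controlled in a $p$-adic sense, which is the key arithmetic input allowing the final bound on $d_t$ to depend only on $\ell$ and $\beta$ (and implicitly on $c$).
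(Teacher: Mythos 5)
The paper states Berman's theorem as Theorem~2.2 by citing~\cite{B} and gives no proof, so there is nothing in-paper to compare against; I am assessing your proposal on its own terms.

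Your setup is correct, the factorisation you write for $(x^{p^a}-1)/\Phi_{p^i}(x)$ does give a weight-$2p^{a-i}$ generator of $E_{p^i}$, and your dimension count is valid \emph{under the extra hypothesis that each $\Phi_{p^j}$ is irreducible over $\mathbb{F}_\ell$}. In that case $\tilde{C}_t$ really is a direct sum of some of the $E_{p^j}$, $i^*_t$ is by definition the largest level that occurs, so $\dim\tilde{C}_t\leq\sum_{j\leq i^*_t}\phi(p^j)=p^{i^*_t}$; combined with $\dim\tilde{C}_t\geq cp^{a_t}$ this gives $p^{a_t-i^*_t}\leq 1/c$ and hence $d_t\leq 2/c$. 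But the difficulty you flag in the reducible case is not a technicality to be cleaned up later: it is the whole content of the theorem. When $\Phi_{p^j}$ splits into $\phi(p^j)/\ord_{p^j}(\ell)$ irreducible factors, an ideal of rate $\geq c$ can absorb one factor from \emph{every} level $j\geq 1$ without ever containing a complete $E_{p^j}$; then $i^*_t=0$ and the bound $d_t\leq 2p^{a_t-i^*_t}$ is vacuous. I am also skeptical of the repair you gesture at, namely bounding the minimum weight of each minimal sub-ideal inside $E_{p^j}$: the minimal ideals have no visible sparse generators (their natural generators $(x^{p^a}-1)/g$ with $g$ irreducible have full degree and no special weight structure), and proving that they each contain a bounded-weight element is essentially the theorem itself, so as stated the proposed fix is circular.

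What does close the gap, and is the sort of ``finer analysis of Galois orbits'' your last paragraph points toward, is to change the \emph{shape} of the sought codeword rather than to hunt for a whole $E_{p^j}$ inside $\tilde{C}_t$. Write $\tilde{C}_t=\{f:\deg f<p^a,\ f(\zeta)=0\ \forall\zeta\in Z\}$, where $Z$ is the Galois-stable zero set of the generator, $|Z|\leq(1-c)p^a$. For a parameter $b$, look for codewords of the form $h(x)=u(x^{p^b})$ with $\deg u<p^{a-b}$: such an $h$ lies in $\tilde{C}_t$ iff $u$ vanishes on the image set $Z^{p^b}=\{\zeta^{p^b}:\zeta\in Z\}$, and $\wt(h)\leq\deg u+1$. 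Since $Z$ is a union of $\ell$-cyclotomic orbits whose sizes eventually grow geometrically ($\ord_{p^j}(\ell)=p^{\,j-j_1}\ord_p(\ell)$ for $j\geq j_1=j_1(p,\ell)$), the $p^b$-power map contracts the high-level orbits by a factor $p^b$ and one gets $|Z^{p^b}|\leq(1-c)p^{a-b}+O_{p,\ell}(1)$; choosing $a-b$ of bounded size (depending on $p,\ell,c$) makes $|Z^{p^b}|<p^{a-b}$, and $u$ the minimal polynomial of $Z^{p^b}$ produces a codeword of bounded weight. This sidesteps the contain-or-miss dichotomy entirely. The multi-prime reduction is also only sketched and not automatic, since Galois orbits in $\mu_{n_t}$ are $\ell$-cyclotomic classes whose lengths are least common multiples mixing the several prime-power components of $n_t$, so a naive prime-by-prime iteration needs justification. (You are right, incidentally, that $m$ must depend on $c$ as well as on $\ell$ and $\beta$, though the paper's statement suppresses this.)
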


\begin{thm}[Babai-Shpilka-Stefankovic]
  Let $F$ be a finite field, and let $(C_t)_t$ be
  a family of $[n_t,k_t,d_t]_F$-cyclic codes over $F$.  Assume that
  there exists $\delta>0$, independent of $t$, such that for every $t$
  and for every prime $p$ dividing $n_t$, we have
  $p < n_t^{1/2-\delta}$. Then the family $(C_t)_t$ is \emph{not} a
  good family of codes over $F$.
\end{thm}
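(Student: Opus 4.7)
The plan is to translate the problem into ring-theoretic language, in which each cyclic code $C_t$ becomes an ideal of the group algebra $R_t = F[x]/(x^{n_t}-1)$, and then exploit the multiplicative structure of $n_t$ forced by the hypothesis. I argue by contradiction: suppose $k_t/n_t \ge c$ and $d_t/n_t \ge c$ for some fixed $c > 0$ and all $t$.

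The first step is a greedy divisor argument. Since every prime $p \mid n_t$ satisfies $p < n_t^{1/2-\delta}$, multiplying in prime factors of $n_t$ until the running product first exceeds $n_t^{1/2}$ yields a divisor $m = m(t)$ of $n_t$ with $n_t^{\delta} \le m \le n_t^{1/2}$, so that both $m$ and $n_t/m$ fall in $[n_t^{\delta}, n_t^{1-\delta}]$. This is the one place where the quantitative hypothesis $p < n_t^{1/2-\delta}$, as opposed to merely $p < n_t$, is actually used.

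Next, for any divisor $m \mid n_t$, I introduce the ideal $I_m = \bigl((x^{n_t}-1)/(x^m-1)\bigr) \subseteq R_t$. Concretely, $I_m$ is the subspace of $F^{n_t}$ consisting of vectors of period $m$; it has $F$-dimension $m$ and is naturally isomorphic, as a cyclic $F$-code, to a subcode $C_t^{(m)} \subseteq F^m$ via $C_t \cap I_m$. A non-zero element of $C_t^{(m)}$ of weight $w$ corresponds, under periodic extension, to a codeword of $C_t$ of weight $w \cdot n_t/m$, so the relative-distance hypothesis forces $C_t^{(m)}$ to have distance at least $cm$, and hence by the Singleton bound dimension at most $(1-c)m + 1$.

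The main obstacle is converting these per-scale constraints into a contradiction with $k_t \ge c n_t$. The natural attempt is to iterate the construction along a chain of divisors $1 \mid p_1 \mid p_1 p_2 \mid \cdots \mid n_t$, build the resulting nested cyclic subcodes $C_t^{(m_i)}$, and combine the Singleton bounds $\dim C_t^{(m_i)} \le (1-c)m_i + 1$ with a dimension argument to reach $k_t \ll n_t$. The difficulty is that the periodic subspaces $I_m$ collectively span only a polynomially thin piece of $R_t$, so a naive intersection count is too weak; one must instead bring in the companion folding quotients $\pi_m\colon R_t \twoheadrightarrow F[x]/(x^m-1)$ and the short exact sequences $0 \to I_m \to R_t \to \pi_m(R_t) \to 0$ they induce, and then set up a strong induction on $n_t$ in which the balanced divisor $m \approx \sqrt{n_t}$ from the first step guarantees that both recursive parameters $m$ and $n_t/m$ are strictly smaller than $n_t$ by a polynomial factor. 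An additional wrinkle is that when $\mathrm{char}(F) = \ell$ divides $n_t$, the ring $F[x]/(x^m-1)$ fails to be semisimple, so the clean ideal-theoretic decompositions have to be replaced by more delicate filtration-based arguments.
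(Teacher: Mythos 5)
The paper does not prove this theorem; it is stated as a citation of Babai--Shpilka--Stefankovic \cite{BSS} with no argument given, so there is no internal proof to compare your sketch against. Your proposal therefore has to stand on its own, and as written it does not close.

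The setup is sound. The greedy divisor step is correct: since every prime factor of $n_t$ is below $n_t^{1/2-\delta}$, accumulating prime factors until the running product first passes $n_t^{1/2}$ produces a divisor $m \mid n_t$ with $n_t^{\delta} < m \le n_t^{1/2}$, and this is exactly where the hypothesis $p < n_t^{1/2-\delta}$ enters. Likewise, the identification of $I_m = \bigl((x^{n_t}-1)/(x^m-1)\bigr)$ as the space of $m$-periodic vectors of dimension $m$, the observation that periodic extension multiplies weights by $n_t/m$, and the resulting Singleton bound $\dim(C_t\cap I_m) \le (1-c)m + 1$ are all correct.

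The genuine gap is the step you yourself flag as ``the main obstacle.'' An upper bound on $\dim(C_t \cap I_m)$ carries no information about $\dim C_t$ unless you also have a lower bound on the intersection, and the trivial estimate $\dim(C_t\cap I_m) \ge \dim C_t + \dim I_m - n_t \ge cn_t + m - n_t$ is vacuous whenever $m < (1-c)n_t$, which is forced here since $m \le n_t^{1/2}$. So the Singleton bound on the folded subcode simply never conflicts with $k_t \ge cn_t$. You gesture at repairing this via the quotient maps $\pi_m\colon R_t \to F[x]/(x^m-1)$, nested chains of divisors, and an induction on $n_t$, but none of that is carried out, and it is not evident how to make the quotient side produce a distance constraint: the image $\pi_m(C_t)$ need not inherit any relative distance from $C_t$ (folding can create cancellations), so the Singleton bound does not apply there. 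Until you supply a mechanism that couples the scale-$m$ constraint to the scale-$n_t/m$ constraint --- which is the actual technical content of the BSS theorem --- the argument is an outline, not a proof. The remark about non-semisimplicity when $\charac(F) \mid n_t$ is a real issue but a secondary one; the primary missing piece is the bridge from local folded bounds to a global bound on $k_t$.
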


There are other results which give some support to a negative answer to Problem \ref{no-good}, for example:
\begin{thm}[Babai-Shpilka-Stefankovic] \label{ldpc} Let $F$ be a
  finite field. Then:
\begin{itemize}
\item There are no good cyclic LDPC \emph{(}low density parity
  check\emph{)} codes over $F$;
\item There are no good cyclic locally testable codes over $F$.
\end{itemize}
\end{thm}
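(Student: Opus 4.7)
The plan for both parts is to exploit the tension between strong local constraints (sparse parity checks for LDPC; few-query tests for LTC) and the large cyclic symmetry group of the code. Both hypotheses produce low-weight elements of $C^\perp$ whose cyclic orbits span (or nearly span) $C^\perp$; cyclic symmetry then lets one reduce to a bounded number of such elements up to shifts. The strategy is to exploit the rigid algebraic structure this imposes on $C^\perp$ via the identification with ideals in $R := \mathbb{F}_\ell[x]/(x^n-1)$ described in Section \ref{s-2}.

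For the LDPC case, the hypothesis translates to $C^\perp = (p_1) + \cdots + (p_r)$ as an ideal of $R$, where each $p_i$ has Hamming weight at most some fixed constant $w$ and, after cyclic reduction, $r$ depends only on $w$. The generator polynomial $\tilde{g}(x)$ of $C^\perp$ must then divide each $p_i$, so $\tilde{g}$ admits a nonzero multiple of weight $\leq w$ in $R$; equivalently, the minimum distance of $C^\perp$ is at most $w$. A weight-distribution analysis via the MacWilliams identities then transfers this to information on $C$: the many low-weight vectors in $C^\perp$ obtained from all cyclic shifts of the $p_i$ either force $\dim C^\perp = o(n)$, in which case $k/n \to 1$ and Singleton gives $d/n \to 0$, or force small-weight vectors in $C$ itself. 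An alternative, graph-theoretic formulation encodes the low-weight constraints as a Tanner graph; for a cyclic LDPC code this graph is an abelian Cayley graph on $\mathbb{Z}/n\mathbb{Z}$ with connection set of bounded size, hence cannot be an expander, and poor expansion is known to cap both the dimension and the distance of the associated code.

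For the locally testable case, I would reduce to the LDPC case via cyclic symmetry. Let $V \subseteq C^\perp$ be the span of all parity checks in the support of the tester together with all their cyclic shifts. If $V \subsetneq C^\perp$, then every $x \in V^\perp \setminus C$ fools the tester, contradicting the soundness condition unless $V^\perp$ lies within Hamming distance $\epsilon n$ of $C$; that is, $V^\perp / C$ would have to be spanned by small-weight vectors. Since both $V^\perp$ and $C$ are cyclic, the quotient inherits the cyclic action, and a standard rigidity argument shows that a cyclic module spanned by small-weight vectors cannot be nontrivial when $n$ is large, forcing $V = C^\perp$. Hence $C$ is effectively LDPC with $w$ equal to the query complexity, and the LDPC analysis applies.

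The main obstacle, in both parts, is to make the above dichotomy quantitative and uniform in $n$. Concretely, one must bound, for every ideal $I \subseteq R$ generated by polynomials of weight $\leq w$, the joint behaviour of $\dim I$ and the weight distribution of $R/I$, independently of how $x^n - 1$ factors over $\mathbb{F}_\ell$. When $n$ has a controlled prime factorisation (as in Berman's setting, with bounded coprime-to-$\ell$ prime factors), the factorisation of $x^n-1$ is well understood and the bound is effective; the genuine difficulty is to handle \emph{all} $n$ uniformly, which requires a refined analysis of the behaviour of weight-$w$ polynomials at the $n$-th roots of unity over $\overline{\mathbb{F}}_\ell$, together with a careful treatment of the $\ell$-part of $n$.
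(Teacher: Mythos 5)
This statement is not proved in the paper at all: Theorem~\ref{ldpc} is quoted as a result of Babai, Shpilka and Stefankovic and attributed to their paper \cite{BSS}, with the remark ``We refer to~\cite[Ch. 47]{mckay} for the definition of LDPC codes, and to~\cite{gsudan} for locally testable codes.'' There is therefore no proof in the text to compare yours against; what follows is an assessment of your sketch on its own terms.

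The opening reductions in your LDPC argument are sound: a cyclic LDPC code $C$ has a cyclic dual $C^\perp$, the low-weight parity-check rows span $C^\perp$ as a vector space and hence generate it as an ideal, the generator polynomial $\tilde g$ of $C^\perp$ divides each of them, and therefore $C^\perp$ has a nonzero element of weight $\leq w$. But from that point on the argument has a genuine gap. Knowing that the \emph{dual} of $C$ contains a low-weight word does not, by itself, bound either the dimension of $C^\perp$ or the minimum distance of $C$: $\dim C^\perp = n - \deg\tilde g$ depends on the degree of $\tilde g$, not on the weight of any multiple of it, and a code can very well have large distance while its dual has tiny distance. The sentence invoking the MacWilliams identities is not a proof; the identities relate the full weight enumerators, and the presence of a single low-weight dual codeword (or even $O(n)$ of them, coming from cyclic shifts) does not quantitatively constrain the weight distribution of $C$ in the way you claim. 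The dichotomy ``either $\dim C^\perp = o(n)$ or small-weight vectors appear in $C$'' is exactly the content of the theorem being proved, and you have not supplied an argument that establishes it. The alternative route via Tanner graphs and non-expansion of abelian Cayley graphs is closer to the spirit of the actual proof in \cite{BSS} but is stated as a black box; ``poor expansion caps both dimension and distance'' needs a concrete mechanism, and it is precisely here that the real work happens. (What \cite{BSS} actually does for the LDPC case is a delicate combinatorial argument on how the translates of the support set $S$ of the weight-$w$ check interact as $i$ ranges over $\mathbb{Z}/n\mathbb{Z}$, producing either a large system of independent constraints or a short linear combination that forces low-weight words in $C$ itself.)

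The LTC-to-LDPC reduction has the same flavor of gap. You would need a precise statement of the local testability hypothesis (query complexity, soundness) and a genuine argument for the step you call ``a standard rigidity argument shows that a cyclic module spanned by small-weight vectors cannot be nontrivial when $n$ is large''; as written this is an assertion, not a proof, and it is also not obviously true without further hypotheses (for instance the ideal generated by $1+X$ in $\mathbb{F}_2[X]/(X^n-1)$ is spanned by weight-$2$ vectors and has dimension $n-1$). So the rigidity claim would have to be formulated much more carefully to carry the weight you want it to carry. In short, the high-level intuition — strong local constraints are in tension with cyclic symmetry — is correct and matches the spirit of \cite{BSS}, but the quantitative heart of the proof is missing in both parts.
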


We refer to~\cite[Ch. 47]{mckay} for the definition of LDPC codes, and
to~\cite{gsudan} for locally testable codes; these are important
concepts in coding theory in recent years.

Let $F$ be any field. The key to the investigation of cyclic codes
over $F$ is their description in algebraic terms using the polynomial
ring $F[X]$.

\begin{pro}
Let $n\geq 1$ be an integer. Under the isomorphism
$$
(a_0,\ldots,a_{n-1})\mapsto a_0+a_1X+\cdots+a_{n-1}X^{n-1}
$$
between $F^n$ and the ring $R=F[X]/(X^n-1)$, a subspace $C\subset R$
is a cyclic code over $F$ if and only if $C$ is an ideal of $R$.
\end{pro}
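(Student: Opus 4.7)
The plan is to reduce the statement to the single observation that the cyclic shift on $F^n$ corresponds, under the stated isomorphism, to multiplication by $X$ inside $R = F[X]/(X^n-1)$. Once this is pinned down, both implications follow by routine manipulations with the ring structure.

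First I would make the identification precise. Given $\alpha=(a_0,\ldots,a_{n-1})\in F^n$, write $f_\alpha(X)=a_0+a_1X+\cdots+a_{n-1}X^{n-1}\in R$. Multiplying in $R$ by $X$ gives
\[
X\cdot f_\alpha(X) = a_0 X + a_1 X^2 + \cdots + a_{n-2}X^{n-1} + a_{n-1}X^n,
\]
and since $X^n \equiv 1 \pmod{X^n-1}$, this equals $a_{n-1} + a_0 X + a_1 X^2 + \cdots + a_{n-2}X^{n-1}$, which is exactly $f_{\sigma(\alpha)}$ where $\sigma(\alpha)=(a_{n-1},a_0,\ldots,a_{n-2})$ is the cyclic shift. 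Therefore $C\subseteq R$ is invariant under the cyclic shift if and only if $X\cdot C \subseteq C$.

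Next I would prove both directions. For the easy direction, if $C$ is an ideal of $R$, then in particular $X\cdot C\subseteq C$, so by the identification above $C$ is invariant under the cyclic shift, i.e.\ $C$ is cyclic. For the converse, assume $C$ is a subspace of $R$ with $X\cdot C\subseteq C$. By iteration, $X^k\cdot C\subseteq C$ for every $k\geq 0$. Since $C$ is an $F$-subspace, it is closed under $F$-linear combinations of such products, so for any $g(X)=\sum_k b_k X^k \in R$ and any $h\in C$ we have $g(X)h = \sum_k b_k (X^k h) \in C$. Hence $R\cdot C\subseteq C$, so $C$ is an ideal of $R$.

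There is no real obstacle here; the one point requiring a line of care is the reduction $X^n=1$ in $R$, which is what makes the shift genuinely cyclic rather than a plain shift. After that, the argument is simply the standard fact that an $F$-subspace of a commutative $F$-algebra generated by one element $X$ is an ideal iff it is stable under multiplication by that generator.
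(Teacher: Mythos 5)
Your proof is correct and follows exactly the same approach as the paper's: you identify the cyclic shift with multiplication by $X$ in $R$, then observe that an $F$-subspace closed under multiplication by $X$ is the same thing as an ideal. The paper compresses this into a single sentence; you spell it out, but the underlying argument is identical.
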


\begin{proof}
  Indeed, an $F$-vector subspace of $R$ is a cyclic code if and only
  if $XP\in C$ for any $P\in C$, which is equivalent to asking that
  $C$ be an ideal of $R$.
\end{proof}

It will also often be convenient to identify the ring $R$ with the
subspace of polynomials $P\in F[X]$ of degree less than $n$.

\subsection{Describing the ideals of $R=F[X]/(X^n-1)$}\label{ssec-ideals}

If we specialize to the case where $n=p$ is a prime number, we can
describe $R$ and its ideals in quite concrete and well-known terms:

\begin{pro} \label{decom} Let $p$ be a prime number different from the
  characteristic $\charac(F)$ of $F$. Then:
\begin{enumerate}
\item The ring $R=F[X]/(X^p-1)$ is a direct sum of finite extensions
  of $F$; these finite extensions are in one to one correspondence
  with the irreducible factors of the polynomial $X^p-1 \in F[X]$.
\item If $x^p-1$ splits in linear factors in $F[x]$ (e.g. if $F$ is
  algebraically closed), then $R$ is isomorphic to $F^p$ as a ring;
\item Assume that $F=\mathbb{F}_{\ell}$ is a finite field of order
  $\ell$.  Let $r=\ord_p(\ell)$, i.e. the order of $\ell$ as an
  element of the multiplicative group
  $(\mathbb{Z}/p\mathbb{Z})^* = \mathbb{F}_p^*$.  Denote $s=(p-1)/r$.
  Then
$$
R=\mathbb{F}_{\ell}[X]/(X^p-1)\cong \mathbb{F}_{\ell} \oplus
(\mathbb{F}_{\ell^r})^s
$$
i.e., it is isomorphic as a ring to a direct sum of
$\mathbb{F}_{\ell}$ and $s$ copies of the extension
$\mathbb{F}_{\ell^r}$ of $\mathbb{F}_{\ell}$.
\end{enumerate}
\end{pro}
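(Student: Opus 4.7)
The plan is to reduce everything to the Chinese Remainder Theorem once separability of $X^p - 1$ is established, and then, in part~(3), to identify the degrees of the irreducible factors via Frobenius orbits on the roots of unity.

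For part~(1), the first step is to observe that since $p \neq \charac(F)$, the derivative of $X^p - 1$ equals $pX^{p-1}$, which is coprime to $X^p - 1$. Hence $X^p - 1$ is a separable polynomial, so in any algebraic closure it has $p$ distinct roots, and its factorization $X^p - 1 = \prod_{i} f_i$ into irreducibles in $F[X]$ involves pairwise distinct, pairwise coprime factors. The Chinese Remainder Theorem then yields
$$R = F[X]/(X^p - 1) \cong \prod_{i} F[X]/(f_i),$$
and each $F[X]/(f_i)$ is a finite field extension of $F$ of degree $\deg f_i$, establishing the bijection between factors and summands.

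Part~(2) is then immediate: if every irreducible factor is linear, $f_i = X - \zeta_i$, then $F[X]/(f_i) \cong F$, and separability forces exactly $p$ such factors, giving $R \cong F^p$.

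For part~(3), with $F = \mathbb{F}_\ell$ and $p \neq \ell$, the roots of $X^p - 1$ in $\bar{\mathbb{F}}_\ell$ consist of $1$ and the $p-1$ primitive $p$-th roots of unity. I would use the standard fact that $\mathbb{F}_{\ell^m}^*$ contains a primitive $p$-th root of unity if and only if $p \mid \ell^m - 1$, i.e., if and only if $r := \ord_p(\ell)$ divides $m$. Hence every primitive $p$-th root of unity lies in $\mathbb{F}_{\ell^r}$ but in no proper subfield, so its minimal polynomial over $\mathbb{F}_\ell$ has degree exactly $r$. The Frobenius $x \mapsto x^\ell$ permutes the primitive $p$-th roots of unity in orbits, each of length $r$, so there are $s = (p-1)/r$ such orbits, each giving one irreducible factor of degree $r$. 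Combined with the factor $X - 1$, the CRT decomposition from part~(1) becomes
$$R \cong \mathbb{F}_\ell \oplus \bigoplus_{j=1}^{s} \mathbb{F}_\ell[X]/(g_j) \cong \mathbb{F}_\ell \oplus (\mathbb{F}_{\ell^r})^s.$$

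The only conceptually non-routine step is identifying the Frobenius orbit length with $\ord_p(\ell)$ in part~(3); once that is in place, everything else is a direct consequence of separability and the Chinese Remainder Theorem.
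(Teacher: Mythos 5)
Your proof is correct and takes essentially the same route as the paper: separability via the derivative criterion, Chinese Remainder Theorem for the decomposition, and the identification of $\mathbb{F}_{\ell^r}$ as the minimal extension containing the $p$-th roots of unity (equivalently, that each primitive $p$-th root has minimal polynomial of degree $r$ over $\mathbb{F}_\ell$). Your explicit framing in terms of Frobenius orbits of length $r$ is a small stylistic variant of the paper's argument that $\mathbb{F}_\ell[\mu] = \mathbb{F}_{\ell^r}$ for each primitive $p$-th root $\mu$, but the underlying idea is identical.
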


\begin{proof}
\begin{enumerate}
\item As $p \ne \charac(F)$, the polynomial $X^p-1$ is separable in
  $F[X]$ and hence factors as a product of distinct irreducible
  polynomials $\prod_{i=0}^s g_i$, where we put $g_0=X-1$.  It then
  follows from the Chinese Remainder Theorem that
$$
R \cong \bigoplus_{i=0}^s F[X]/(g_i).
$$
Since $g_i$ is irreducible, each quotient ring $F[X]/(g_i)$ is a field
extension of $F$ of degree $\deg(g_i)$.

\item By assumption, $X^p-1 = \prod_{i=0}^{p-1}(X-\mu_i)$, where
  $\mu_i$ runs over the $p$-th roots of unity in $F$. Since
  $F[X]/(X-\alpha) \cong F$, we get an isomorphism
$$
R \cong \bigoplus_{i=0}^{p-1} F[X]/(X-\mu_i) \cong F^p.
$$

\item Since $\mathbb{F}_{p}^*$ is a cyclic group of order $p-1$, the
  order $r$ of $\ell$ modulo $p$ divides $p-1$, and hence $s=(p-1)/r$
  is an integer.
\par
We have $\ell^r \equiv 1(\mbox{mod }p)$ and $\mathbb{F}_{\ell^r}^*$ is
a cyclic group of order $\ell^r-1$, hence the field extension
$\mathbb{F}_{\ell^r}$ of $\mathbb{F}_{\ell}$ contains an element of
order $p$, and is the smallest extension with this property. In fact,
the field $\mathbb{F}_{\ell^r}$ contains all the $p$-th roots of
unity, i.e. $\mathbb{F}_{\ell^r}$ is the splitting field of the
polynomial $X^p-1$. For every $p$-th root of unity $\mu$, the
extension $\mathbb{F}_{\ell}[\mu]$ is equal to $\mathbb{F}_{\ell^r}$
(in a fixed algebraic closure of $\mathbb{F}_{\ell}$).  This shows
that all the irreducible factors $g_i$ of $X^p-1$, with the exception
of $X-1$, are of degree $r$.  Hence
$$
R \cong \mathbb{F}_{\ell} \oplus (\mathbb{F}_{\ell^r})^s.
$$
\end{enumerate}
\end{proof}

We can now describe the ideals of $R$.  Since $R$ is a direct sum of
fields, every ideal in $R$ is the direct sum of a certain subset of
these fields.  If $F$ is algebraically closed, for instance, we see
that $R$ has $\binom{p}{i}$ distinct ideals of dimension $i$, for
every $0 \leq i \leq p$, and a total of $2^p$ ideals.

If $F=\mathbb{F}_{\ell}$ where $\ell$ is the power of a prime number,
let $r$ be the order of $\ell$ modulo $p$ and $s = \frac{p-1}{r}$ as
in the proposition.  In the special case $r=1$, namely when
$p\mid \ell-1$, the polynomial $X^p-1$ splits completely in
$\mathbb{F}_{\ell}[X]$ and the ideals are exactly the same as those in
the algebraically closed case.  

Now assume that $r>1$, which is the case we are most interested in
since we will consider a fixed value of $\ell$ as $p$ tends to
$\infty$. Then $R$ has $\binom{s}{i}$ ideals of dimension $ir$ and
$\binom{s}{i}$ ideals of dimension $ir+1$ for all integers $i$ with
$0 \leq i \leq s$. Hence the total number of ideals in $R$ is
$2^{s+1}$.

We note that $r \geq \log_{\ell}(p+1)$, and hence
$s \leq \frac{p-1}{\log_{\ell}(p+1)}$.  

There are two extreme cases which are worth singling out, although
whether they actually occur is somewhat conjectural:
\begin{description}
\item[(a)] Assume that $\ell$ is a primitive root mod $p$, i.e. $\ell$
  generates the cyclic group $(\Zz/p\Zz)^*$.  Then $r=p-1$ and so $s=1$, i.e.
  $R \cong \mathbb{F}_l \oplus \mathbb{F}_{l^{p-1}}$ and $R$ has only
  two non-trivial ideals.
\item[(b)] Assume that $\ell=2$ and that $p$ is a Mersenne prime,
  namely $p=2^m-1$ for some $m\geq 2$. Then we have $r=m=\log_2(p+1)$
  and $s=\frac{p-1}{\log_2(p+1)}$; in this case, $R$ has the
  ``maximal'' possible number of ideals
  $2^{\frac{p-1}{\log_2(p+1)}+1}$.
\end{description}

We stated that it is not known if these cases occur infinitely
often. Indeed, it is a very famous conjecture of Artin (see Moree's
survey~\cite{moree}) that, for a given prime number $\ell$, there
exist infinitely many primes $p$ such that $\ell$ is a primitive root
modulo $p$.  The validity of this conjecture is extremely likely,
since it was shown by Hooley~\cite{hooley} to follow from a suitable
form of the Generalized Riemann Hypothesis. Moreover, although it not
known to hold for any concrete single prime $\ell$,
Heath-Brown~\cite{HB} has shown that it holds for all but at most two
(unspecified) prime numbers.
\par
On the other hand, although it is expected that there are infinitely
many Mersenne primes, very little is known about this question, or
about small values of $\ord_{p}(2)$ in general, even assuming such
conjectures as the Generalized Riemann Hypothesis (see however
Lemma~\ref{lm-split}).

The most convenient analytic criterion to find primes with $\ord_p(\ell)$
under control is the following elementary fact:

\begin{lem}\label{lm-chebo}
  Let $\ell$, $q$ and $ p$ be different primes. If $p$ is totally
  split in the extension
  $K_{q,\ell}=\mathbb{Q}(e^{2i\pi/q},\sqrt[q]{\ell})$, then $p$ is
  congruent to $1$ modulo $q$ and the order of $\ell$ modulo $p$
  divides $(p-1)/q$, in particular $\ord_p(\ell)<p/q$.
\end{lem}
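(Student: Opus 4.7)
The plan is to use standard algebraic number theory: the splitting behaviour of a rational prime $p$ in a number field controls both the congruence class of $p$ and the factorisation modulo $p$ of defining polynomials.

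First, I would note that $K_{q,\ell}$ contains the cyclotomic field $\mathbb{Q}(e^{2i\pi/q})$. It is a classical fact that a prime $p$ (distinct from $q$) splits completely in $\mathbb{Q}(e^{2i\pi/q})$ if and only if $p \equiv 1 \pmod{q}$. Since complete splitting in $K_{q,\ell}$ forces complete splitting in the subfield $\mathbb{Q}(e^{2i\pi/q})$, this immediately yields $q \mid p-1$.

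Next, I would exploit that $K_{q,\ell}$ is the splitting field over $\mathbb{Q}$ of $X^q-\ell$, whose roots $\sqrt[q]{\ell},\, e^{2i\pi/q}\sqrt[q]{\ell},\ldots,\, e^{2i\pi(q-1)/q}\sqrt[q]{\ell}$ all lie in $\mathcal{O}_{K_{q,\ell}}$. Pick any prime $\mathfrak{p}$ of $K_{q,\ell}$ above $p$; by the complete splitting hypothesis, the residue field $\mathcal{O}_{K_{q,\ell}}/\mathfrak{p}$ equals $\mathbb{F}_p$. Reducing the $q$ roots above modulo $\mathfrak{p}$ therefore produces $q$ elements of $\mathbb{F}_p$ whose product with the correct signs gives the factorisation of $X^q - \ell$ in $\mathbb{F}_p[X]$. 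In particular, there exists $a \in \mathbb{F}_p^*$ with $a^q \equiv \ell \pmod p$ (and $a \neq 0$ because $\ell \neq p$).

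From $\ell \equiv a^q \pmod p$ and Fermat's little theorem I conclude
\[
\ell^{(p-1)/q} \equiv a^{p-1} \equiv 1 \pmod p,
\]
so that $\ord_p(\ell)$ divides $(p-1)/q$, which is strictly less than $p/q$. This gives all three conclusions of the lemma. No real obstacle is expected; the only point requiring care is justifying that complete splitting in the full Galois extension $K_{q,\ell}$ forces $X^q-\ell$ to have a root modulo $p$, but this follows immediately from the fact that every element of $\mathcal{O}_{K_{q,\ell}}$ reduces to an element of the residue field $\mathbb{F}_p$ when $p$ splits completely.
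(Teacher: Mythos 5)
Your argument is correct and takes essentially the same route as the paper's: both exploit that complete splitting of $p$ makes the residue field above $p$ equal to $\mathbb{F}_p$, so that reduction modulo a prime above $p$ puts a primitive $q$-th root of unity (giving $q \mid p-1$) and a $q$-th root of $\ell$ into $\mathbb{F}_p$, whence $\ell^{(p-1)/q}\equiv 1\pmod p$. The paper phrases this in terms of $\mathcal{O}/p\mathcal{O}$ being a product of copies of $\mathbb{F}_p$ rather than picking a single prime $\mathfrak{p}$ above $p$, but the content is identical; your version is just slightly more explicit about the cyclotomic subfield and the final application of Fermat's little theorem.
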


\begin{proof}
  Let $\mathcal{O}$ be the ring of integers of $K_{q,\ell}$. If $p$ is
  totally split in $K_{q,\ell}$, then the quotient ring
  $\mathcal{O}/p\mathcal{O}$ is a product of copies of the field
  $\mathbb{F}_p$. So $\mathbb{F}_p$ contains the $q$-th roots of
  unity (in particular, $q \mid p-1$) and the $q$-th roots of
  $\ell$. So $\ell$ is an $q$-th power in $\mathbb{F}_p$, which means
  that $\ord_p(\ell)$ divides $(p-1)/q$.
\end{proof}

Note that as an application of Chebotarev's density Theorem \cite[Th.~13.4]{neukirch}, 
for any primes $q, \ell$, there exists infinitely many primes which 
totally splits in $K_{q,\ell}$. 

To summarize the discussion: the ideals of $R$ and their dimensions
can be easily described, although the existence of certain
configurations might be subject to the truth of certain arithmetic
conjectures.

It is more complicated to evaluate the distance of ideals of $R$ when
interpreted as cyclic codes. For this we will use the Fourier
transform and the uncertainty principle in the next section. We begin
first with a general lemma.

\begin{lem} \label{lem-dim-roots} Let $p$ be a prime. For any
  polynomial $f\in F[X]$, let $I_f$ be the ideal generated by the
  image of $f$ in $R=F[X]/(X^p-1)$ and let $g=\gcd(f,X^p-1)$.
\begin{enumerate}
\item We have $I_f=I_g$, i.e. the ideal
  generated by $f$ is the same as the ideal generated by the greatest
  common divisor of $f$ and $X^p-1$.
\item We have
$$
\dim I_f = \dim I_g= p-\deg(g)
$$
\end{enumerate}
\end{lem}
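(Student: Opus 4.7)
The plan is to argue both parts using the fact that $F[X]$ is a principal ideal domain, so any finitely generated ideal in $F[X]$ (and hence its image in the quotient $R$) has an especially clean description.

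For part (1), I would observe that the ideal $I_f \subset R$ is the image under the quotient map $F[X]\to R$ of the ideal generated by $f$ and $X^p-1$ in $F[X]$. Since $F[X]$ is a PID, $(f,X^p-1)=(g)$, where $g=\gcd(f,X^p-1)$. Passing to $R$, this gives $I_f = I_g$ at once. This step is essentially formal.

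For part (2), the idea is to identify $I_g$ with a quotient of $F[X]$ whose dimension we can read off from $\deg(g)$. Write $X^p-1 = g\cdot h$ with $\deg(h)=p-\deg(g)$. I would define the multiplication-by-$g$ map
\[
\varphi : F[X]/(h) \longrightarrow R, \qquad q + (h)\ \longmapsto\ gq + (X^p-1).
\]
This is well-defined because $g\cdot h = X^p-1$, and its image is exactly $I_g$. The main step is then to check that $\varphi$ is injective: if $gq \equiv 0 \pmod{X^p-1}$, then $gh\mid gq$ in $F[X]$, hence $h\mid q$, so $q\equiv 0 \pmod h$. It follows that
\[
\dim_F I_g \;=\; \dim_F F[X]/(h) \;=\; \deg(h) \;=\; p-\deg(g),
\]
and combining with (1) yields the claimed formula for $\dim I_f$.

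There is no real obstacle here; the only thing to be slightly careful about is the injectivity of $\varphi$, which uses that $g$ is a non-zero-divisor factor of $X^p-1$ in the PID $F[X]$. The separability hypothesis on $X^p-1$ (i.e.\ $p\neq\charac(F)$) is not actually needed for this lemma: the argument relies only on the factorization $X^p-1=gh$ in $F[X]$.
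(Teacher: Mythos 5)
Your proof is correct and takes essentially the same route as the paper: part (1) rests on the PID/Bezout observation that $(f,X^p-1)=(g)$ in $F[X]$, and part (2) on the factorization $X^p-1=gh$ together with Euclidean division by $h$, which you package as the explicit isomorphism $F[X]/(h)\cong I_g$ (the paper instead exhibits the basis $\{X^i f: 0\le i<\deg h\}$ of $I_f$, which is the same dimension count). Your remark that separability of $X^p-1$ is not needed is accurate and worth noting, since the paper later applies this lemma in characteristic $p$ (Proposition~\ref{p,p}).
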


\begin{proof}
  (a) We obviously have $ \gcd(f,X^p-1)\mid f$ in $F[X]$, and since
  $F[X]$ is a principal ideal domain, there exist polynomials $h_1$
  and $h_2$ in $F[X]$ such that $\gcd(f,X^p-1) = h_1f +
  h_2(X^p-1)$. Hence we get $f \mid \gcd(f,X^p-1)$ in $R$, which
  proves claim (a).
\par
(b) The first equality follows from (a). For the second equality, it
suffices to note that, by euclidean division by the polynomial
$(X^p-1)/g$ of degree $d=p-\deg(g)$, the elements
$\{X^i\cdot f\,|\,i=0,1,\ldots,d-1\}$ form a basis of $I_f$.
\end{proof}

For later reference, we will denote $\Z(f)=\deg(\gcd(f,X^p-1))$ for any
polynomial $f\in F[X]$ and any prime $p$. If $F$ has characteristic
different from $p$, then $X^p-1$ is a separable polynomial, and in
that case, the integer $\Z(f)$ is therefore the number of $p$-th roots
of unity $\xi$, in an algebraic closure of $F$, such that
$f(\xi)=0$. This interpretation will be very useful as we now turn to
the uncertainty principle...


\section{The uncertainty principle over $\Cc$} \label{s-3}

\subsection{The Fourier transform on finite abelian groups}

Let $A$ be a finite abelian group. The dual group $\widehat{A}$ of $A$
is the group of all homomorphisms $A\to \mathbb{S}^1$, where
$\mathbb{S}^1$ is the group of complex numbers of modulus $1$. The
product on $\widehat{A}$ is the pointwise multiplication of
functions. The dual group is also a finite abelian group, in fact it
is isomorphic to $A$ (non-canonically).
\par
The \emph{Fourier transform} on $A$ is a linear map from the space
$L^2(A)=\Cc^{A}$ of complex-valued functions on $A$ to the analogue
space $L^2(\widehat{A})$ of complex-valued functions on the dual
group. For a function $f\colon A\to\Cc$, its Fourier transform
$\widehat{f}\colon \widehat{A}\to\Cc$ is defined by
$$
\widehat{f}(\chi)=\frac{1}{|A|}\sum_{a\in A}f(a)\overline{\chi(a)}
$$
for any $\chi\in\widehat{A}$.
\par
The Fourier transform is also an algebra isomorphism, where $L^2(A)$
is viewed as an algebra with the convolution product
$$
(f_1\star f_2)(x)=\frac{1}{|A|}\sum_{a\in A}f_1(x-a)f_2(a),
$$
and $L^2(\widehat{A})$ has the pointwise product of functions.  In
other words, we have
$$
\widehat{f_1\star f_2}=\widehat{f}_1\cdot \widehat{f}_2.
$$

The connection that we will make with cyclic codes emphasizes the
group algebra of a cyclic group. It is therefore convenient to
interpret the Fourier transform in terms of the group algebra $\Cc[A]$
of the group $A$ instead of $L^2(A)$.

We identify $L^2(A)$ and $\Cc[A]$ by the map
$$
f \mapsto \sum_{a\in A}f(a) a.
$$
Then the Fourier
transform gives an isomorphism 
$$
\Cc[A]\longrightarrow \Cc^{A}
$$
of algebras over $\Cc$, where the image of the standard basis
$\{a\in A\}$ is the basis of characters of the algebra of functions
$\Cc^{|A|}$.



\subsection{The general uncertainty principle for finite abelian
  groups}\label{sec-abup}

For $f\in L^2(A)$, or equivalently $f\in\Cc[A]$, we denote by
$\supp(f)$ the support of $f$, namely the set of $a\in A$ such that
$f(a)\ne 0$.

Intuitively, by``uncertainty principle'', we mean a statement that
asserts that there are no non-zero functions $f$ such that both $f$
and $\widehat{f}$ have ``small'' support (for instance, in the
continuous case, there is no non-zero smooth function with compact
support whose Fourier transform is also compactly supported).  There
are many variants of this principle. One well-known elementary
``uncertainty principle'' version, valid for all finite abelian
groups, is the following result of Donoho and Stark~\cite[\S
2]{donoho-stark}:

\begin{pro}[Uncertainty principle]\label{pr-uncertain}
  Let $A$ be a finite abelian group and let $f\not=0$ be a function
  from $A$ to $\Cc$. Then we have
\begin{equation}
|\supp(f)|\cdot |\supp(\hat{f})| \geq |A|
\end{equation} 
\end{pro}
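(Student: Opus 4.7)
The plan is to use a clean duality between the $L^\infty$ and (weighted) $L^1$ norms of $f$ and $\widehat{f}$, which goes back to Donoho and Stark. This reduces the uncertainty principle to two straightforward triangle-inequality bounds together with the Fourier inversion formula; no Plancherel identity is really needed (though it could be used as an alternative second step).

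The first step is to bound $\widehat{f}$ pointwise. Directly from the definition
\[
\widehat{f}(\chi)=\frac{1}{|A|}\sum_{a\in A}f(a)\overline{\chi(a)},
\]
the triangle inequality and the fact that $|\chi(a)|=1$ for all characters $\chi$ give
\[
|\widehat{f}(\chi)|\leq \frac{1}{|A|}\sum_{a\in \supp(f)}|f(a)|\leq \frac{|\supp(f)|}{|A|}\,\|f\|_\infty,
\]
so $\|\widehat{f}\|_\infty\leq \tfrac{|\supp(f)|}{|A|}\|f\|_\infty$.

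The second step is to bound $f$ pointwise using Fourier inversion. A short check using orthogonality of characters shows that, with the normalization in the excerpt, one has the reconstruction formula
\[
f(a)=\sum_{\chi\in\widehat{A}}\widehat{f}(\chi)\chi(a)
\]
for every $a\in A$. Applying the triangle inequality and the same observation $|\chi(a)|=1$ yields
\[
|f(a)|\leq \sum_{\chi\in \supp(\widehat{f})}|\widehat{f}(\chi)|\leq |\supp(\widehat{f})|\cdot \|\widehat{f}\|_\infty,
\]
hence $\|f\|_\infty\leq |\supp(\widehat{f})|\cdot\|\widehat{f}\|_\infty$.

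Combining the two inequalities gives
\[
\|f\|_\infty\leq \frac{|\supp(f)|\cdot|\supp(\widehat{f})|}{|A|}\,\|f\|_\infty.
\]
Since $f\neq 0$, we have $\|f\|_\infty>0$, and we may divide through to conclude $|\supp(f)|\cdot|\supp(\widehat{f})|\geq |A|$. There is no real obstacle here; the only point one has to be slightly careful about is matching the normalization constant $1/|A|$ in $\widehat{f}$ with the correct (unnormalized) form of the Fourier inversion formula, so that the factor of $|A|$ ends up on the correct side of the final inequality.
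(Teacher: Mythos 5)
Your proof is correct; it is essentially the original Donoho--Stark norm-duality argument. You bound $\|\widehat{f}\|_\infty$ in terms of $\|f\|_\infty$ and $|\supp(f)|$ via the triangle inequality applied to the defining sum, then bound $\|f\|_\infty$ in terms of $\|\widehat{f}\|_\infty$ and $|\supp(\widehat{f})|$ via the triangle inequality applied to the inversion formula, and divide by $\|f\|_\infty>0$. The normalization check you flag at the end is the only place a careless writer could slip, and you have it right: with $\widehat{f}(\chi)=\frac{1}{|A|}\sum_a f(a)\overline{\chi(a)}$ the inversion formula $f(a)=\sum_\chi \widehat{f}(\chi)\chi(a)$ carries no extra factor of $|A|$, so the lone $1/|A|$ from the forward transform lands where it should.

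The paper takes a genuinely different route, attributed to Goldstein--Guralnick--Isaacs. There $f$ is viewed as an element of the group algebra $\Cc[A]$, and the key object is the principal ideal $I=(f)$. Under the Fourier isomorphism $\Cc[A]\cong\Cc^{A}$, this ideal becomes $\prod_{\widehat{f}(\chi)\neq 0}\Cc$, so $\dim_{\Cc} I=|\supp(\widehat{f})|$. Since the translates $a\cdot f$ span $I$, one can pick $r=\dim I$ of them spanning $I$; each has support of size $|\supp(f)|$, and every translate of $f$ (hence all of $A$, by translating the nonempty support around) is contained in the union of their supports, giving $|A|\leq r\,|\supp(f)|$. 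Your argument is shorter and more self-contained — it needs only the triangle inequality and inversion, with no linear-algebraic bookkeeping — and is the natural choice if the goal is just this proposition. The paper's argument, by contrast, is chosen deliberately because it phrases the support of $\widehat{f}$ as the dimension of an ideal in the group algebra; that is exactly the dictionary the paper uses throughout (ideals of $F[X]/(X^p-1)$ as cyclic codes, $\dim(I_f)=p-\Z(f)$, etc.), so their proof is really a warm-up that introduces the bookkeeping they will reuse over general fields, where your $\Cc$-analytic norm estimates have no analogue.
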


We present the proof of this fact from \cite{GGI}, which fits well
with our point of view of working with group algebras.  For other
proofs and generalizations, we refer to the papers \cite{M2},
\cite{M3} and \cite{T}, as well as to the references contained in
those articles.

\begin{proof} 
  We view $f$ as an element of the group algebra $\Cc[A]$, which is
  commutative. Let $I=(f)$ be the principal ideal generated by
  $f$. Using the isomorphism $\Cc[A]\simeq \Cc^{A}$ given by the
  Fourier transform, as we recalled above, the ideal $I$ corresponds
  to the principal ideal in $\Cc^{A}$ generated by the Fourier
  transform of $f$. This ideal is simply 
$$
\prod_{\widehat{f}(x)\not=0} \Cc\subset \Cc^{A}.
$$
In particular, the dimension $r$ of $I$, as a $\Cc$-vector space, is
the cardinality of the support of $\widehat{f}$. Since the elements
$a\cdot f$ for $a\in A$ span $I$ as $\Cc$-vector space, there exist
$r$ elements $a_1$, \ldots, $a_r$ such that $I$ is the span of
$a_1\cdot f$, \ldots, $a_r\cdot f$.
\par
For any $a\in A\subset \Cc[A]$, the support of $a\cdot f$ is
$a\cdot \supp(f)$. Since $f\ne0$, its support is not empty, hence for
any $x\in A$, we can find some element $a\in A\subset \Cc[A]$ such
that $x\in \supp(a\cdot f)$.
\par
We then have
$$
A=\bigcup_{a\in A} \supp(a\cdot f)\subset \bigcup_{i=1}^r
\supp(a_i\cdot f)
$$
which implies that
$$
|A|\leq \sum_{i=1}^r |\supp(a_i\cdot f)| =
r|\supp(f)|=|\supp(\widehat{f})|\cdot |\supp(f)|,
$$
as claimed.
\end{proof}

\subsection{The uncertainty principle for simple cyclic groups}

In the late 1980's, R. Meshulam observed that an old result of
Chebotarev implies a version of the uncertainty principle for cyclic
groups of prime order $p$ that is much stronger than
Proposition~\ref{pr-uncertain}. This strong version has been
rediscovered several times since then, and admits a number of proofs
and generalizations (see for instance, Chebotarev~\cite{C},
Meshulam~\cite{M1,M2,M3}, Goldstein, Guralnick and Isaacs~\cite{GGI},
Tao~\cite{T}, Stevenhagen and Lenstra~\cite{SL}, and the references
therein).

\begin{thm}[Uncertainty principle for cyclic groups
  of prime order]\label{up-abelian} 
  Let $A$ be a cyclic group of prime order $p$, and $f\not=0$ an
  element of $\Cc[A]$. Then
\begin{equation}\label{3}
  |\supp(f)|+|\supp(\widehat{f})| \geq p+1.
\end{equation}
\end{thm}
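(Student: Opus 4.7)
The plan is to reformulate the inequality via the polynomial-algebra picture and then invoke Chebotarev's theorem on minors of the DFT matrix (as discussed in the Appendix). Under the isomorphism $\Cc[A] \cong \Cc[X]/(X^p-1)$, an element $f$ becomes a polynomial $\sum_{i \in S} a_i X^i$ of degree less than $p$, with $S = \supp(f)$. The Fourier isomorphism $\Cc[A] \to \Cc^A$ is, up to the normalization $1/p$, evaluation at the $p$ distinct $p$-th roots of unity; hence $|\supp(\widehat{f})|$ equals the number of $p$-th roots of unity at which $f$ does \emph{not} vanish, and $p - |\supp(\widehat{f})| = \Z(f)$ in the notation of Lemma~\ref{lem-dim-roots}. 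The statement~(\ref{3}) is therefore equivalent to the implication: whenever $f\neq 0$, one has $|S| > \Z(f)$.

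Suppose, toward a contradiction, that $|S| \leq \Z(f)$. The vanishing conditions $f(\xi) = 0$ over the $\Z(f)$ roots at which $f$ vanishes give a homogeneous linear system of at least $|S|$ equations in the $|S|$ unknowns $(a_i)_{i \in S}$. Selecting any $|S|$ of these equations yields a square linear system whose coefficient matrix is an $|S| \times |S|$ submatrix of the full DFT matrix $(\zeta^{ij})_{0 \leq i,j < p}$, where $\zeta$ is a primitive $p$-th root of unity. The goal is now reduced to showing that every such square submatrix is invertible.

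This is exactly Chebotarev's theorem: when $p$ is prime, \emph{every} square submatrix of the DFT matrix has nonzero determinant. Granting it, the linear system forces $a_i = 0$ for every $i \in S$, contradicting $f \neq 0$, and the proof concludes. The main obstacle is therefore Chebotarev's theorem itself, which is genuinely arithmetic --- it fails over composite moduli and must exploit the fact that the $p$-th cyclotomic polynomial $(X^p-1)/(X-1)$ is irreducible over $\Zz$. The standard proof, to be given in the Appendix, works inside $\Zz[\zeta]$: one reduces such a minor modulo the unique prime $\pi$ above $p$ and computes the $\pi$-adic valuation of the resulting (Vandermonde-like) determinant, showing it cannot exceed the valuation forced by a hypothetical vanishing.
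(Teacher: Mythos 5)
Your argument is correct, but it takes a genuinely different route from the paper. You reduce the inequality to Chebotarev's theorem on Vandermonde minors, treating that theorem as the key input. The paper, by contrast, proves Theorem~\ref{up-abelian} as a special case of Theorem~\ref{up-0} (uncertainty principle over every field of characteristic zero): it uses the Specialization Lemma~\ref{spec} (via the Nullstellensatz and reduction modulo a maximal ideal over $q=p$) to pass to a polynomial $\tilde f$ over a field of characteristic $p$ with smaller or equal weight and ideal dimension, and then proves the characteristic-$p$ case (Proposition~\ref{p,p}) by an elementary induction on degree, differentiating or dividing by $X$ as appropriate, using $X^p-1=(X-1)^p$. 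The relationship between your route and theirs is exactly the content of the Appendix: the uncertainty principle over $\Cc$ and Chebotarev's minor theorem are shown there to be equivalent, and the translation you carry out (restricting the DFT matrix to the rows indexed by vanishing roots and the columns indexed by $\supp(f)$) is essentially the same bookkeeping as the Appendix's proof of equivalence. What you buy is brevity, by importing Chebotarev as a black box; what the paper buys is self-containment and, more importantly, a method that generalizes to other fields (the same specialization-to-characteristic-$p$ device proves the result for \emph{all} characteristic-zero fields at once). One caveat: the paper never actually proves Chebotarev's theorem, only cites~\cite{F} for a direct proof, so your closing sketch (reducing a minor modulo the unique prime $\pi=1-\zeta$ above $p$ in $\Zz[\zeta]$ and computing $\pi$-adic valuations) is material the paper relegates to a reference rather than presenting. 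That sketch is the standard approach, but note that at its heart it is again a reduction-to-characteristic-$p$ argument, so the two proofs are closer in spirit than they may first appear.
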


We will postpone the proof to Section~\ref{sec-abup}, and in the
appendix, we will also explain Meshulam's original observation that
this statement is equivalent to a classical result of Chebotarev about
Vandermonde matrices.

To bring the connection with codes, we will now reformulate this
statement. The group algebra $\Cc[\Zz/p\Zz]$ of the cyclic group of
order $p$ is isomorphic to the quotient algebra $R=\Cc[X]/(X^p-1)$ by
mapping the generator $1$ of $\Zz/p\Zz$ to the image of $X$. The dual
group $\widehat{\Zz/p\Zz}$ is isomorphic to the group $\mu_p(\Cc)$ of
$p$-th roots of unity in $\Cc$, by mapping a character $\chi$ to the
$p$-th root of unity $\chi(1)$. The Fourier transform of an element
$f\in R$, represented as the image of a polynomial
\begin{equation}\label{eq-repr-pol}
  f=a_0+a_1X+\cdots+a_{p-1}X^{p-1}
\end{equation}
is then identified with the function defined on $p$-th roots of unity
by
$$
\widehat{f}(\xi)=\frac{1}{p}\sum_{i=0}^{p-1} a_i\xi^{-i}.
$$
In other words, $\widehat{f}$ is the evaluation of the representing
polynomial~(\ref{eq-repr-pol}) at roots of unity.


With this notation, recalling the definition $\Z(f)=\deg(\gcd(f,X^p-1))$ and
the fact that this is number of zeros of $f$ among $p$-th roots of
unity, the uncertainty principle of Theorem \ref{up-abelian} gets the
following form:

\begin{thm} \label{up-abelian2} Let $p$ be a prime. For any polynomial
$$
f = \sum_{i=0}^{p-1} a_iX^i \in \mathbb{C}[X]
$$
of degree $<p$, let $\wt(f) = |\{i| a_i \ne 0 \}|$ and let
$\Z(f) = |\{\mu\in\mu_p(\Cc)| f(\mu) =0 \}|$, i.e.  the number of
$p$-th roots of unity of $f$ which are also roots of $f$. Then we have
\begin{equation} 
  \label{4} \Z(f) \leq \wt(f) -1.
\end{equation}
\end{thm}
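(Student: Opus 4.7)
The plan is to observe that Theorem~\ref{up-abelian2} is a direct reformulation of Theorem~\ref{up-abelian} under the dictionary set up earlier in Section~\ref{s-3} between the group algebra $\Cc[\Zz/p\Zz]$ and $R=\Cc[X]/(X^p-1)$. All that needs to be done is to identify the two sides and appeal to the uncertainty principle.

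First I would identify $f=\sum_{i=0}^{p-1}a_iX^i$ with the element $\sum_i a_i\cdot i\in\Cc[\Zz/p\Zz]$, equivalently with the function $i\mapsto a_i$ on $\Zz/p\Zz$. Under this identification, the number of nonzero coefficients is exactly the size of the support, so
\[
\wt(f)=|\supp(f)|.
\]
Next I would recall from the paragraph preceding the theorem that the dual group of $\Zz/p\Zz$ is identified with $\mu_p(\Cc)$, with the Fourier transform given by $\widehat{f}(\xi)=\frac{1}{p}\sum_i a_i\xi^{-i}=\frac{1}{p}f(\xi^{-1})$. Since inversion is a bijection of $\mu_p(\Cc)$, the set of $p$-th roots of unity $\xi$ with $\widehat{f}(\xi)=0$ is in bijection with the set of $p$-th roots of unity that are zeros of the polynomial $f$. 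Therefore
\[
|\supp(\widehat{f})|=p-\Z(f).
\]

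Finally, I would apply Theorem~\ref{up-abelian}: since $f\ne 0$, we have $|\supp(f)|+|\supp(\widehat{f})|\geq p+1$, which after substituting the two identifications above becomes
\[
\wt(f)+\bigl(p-\Z(f)\bigr)\geq p+1,
\]
equivalently $\Z(f)\leq \wt(f)-1$, as desired. There is really no obstacle in this argument; the entire content sits in Theorem~\ref{up-abelian}, whose proof is deferred to Section~\ref{sec-abup} and the Appendix (via Chebotarev's theorem on Vandermonde determinants with $p$-th roots of unity). The only thing to be careful about is the bookkeeping on the Fourier side: one must check that taking zeros of $\widehat{f}$ in $\mu_p(\Cc)$ indeed gives the same count as taking zeros of $f$ in $\mu_p(\Cc)$, which uses only the fact that $\xi\mapsto\xi^{-1}$ is a bijection of $\mu_p(\Cc)$.
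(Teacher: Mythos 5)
Your proposal is correct and matches the paper's own (very brief) justification: both identify $\wt(f)=|\supp(f)|$ and $|\supp(\widehat{f})|=p-\Z(f)$ under the group-algebra dictionary and then invoke Theorem~\ref{up-abelian}. You add the small but worthwhile remark that the bijection $\xi\mapsto\xi^{-1}$ on $\mu_p(\Cc)$ is what makes the zero-count of $\widehat{f}$ agree with the zero-count of $f$ on roots of unity, which the paper leaves implicit.
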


Indeed, by definition, if we view $f$ as an element of
$R=\Cc[\Zz/p\Zz]$, then we have $|\supp(f)|=\wt(f)$ and
$|\supp(\widehat{f})|=p-\Z(f)$, and therefore \eqref{3} and \eqref{4}
are equivalent.

\begin{rem}
  (1) The restriction $\deg(f)<p$ is necessary: the polynomial
  $f=X^p-1$ has $\wt(f)=2$ and $\Z(f)=p$.
\par
(2) The inequality \eqref{4} is best possible. For instance, the
cyclotomic polynomial $f=\frac{X^p-1}{X-1} = 1+X+\ldots+X^{p-1}$
vanishes on all the non-trivial $p$-roots of unity, so
$\Z(f)=p-1=\wt(f)-1$.  Another example is $f=X-1$, in which case we
also obtain $\Z(f)=1=\wt(f)-1$.
\end{rem}

We can now use Lemma~\ref{lem-dim-roots} to obtain another
reformulation of Theorems \ref{up-abelian} and \ref{up-abelian2}.  The
point is that if $f$ is a polynomial in $\Cc[X]$ of degree $<p$,
viewed also as an element of $R$, then by Lemma~\ref{lem-dim-roots}
(2), the dimension of the ideal $I_f$ generated by the image of $f$ in
$R$ satisfies
$$
\dim(I_f)= p - \Z(f).
$$
From Theorem~\ref{up-abelian2}, we get therefore:

\begin{thm}[Uncertainty principle reformulated] \label{up-abelian3}
  For every non-zero polynomial $f \in \mathbb{C}[X]$ of degree $<p$,
  considered as an element of $R=\mathbb{C}[X]/(X^p-1)$, we have:
\begin{equation}\label{eq-up3}
  \wt(f) + \dim(I_f) \geq p + 1
\end{equation} 
when $I_f=(f)$ is the ideal of $R$ generated by the image of $f$.
\end{thm}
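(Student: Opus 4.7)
The proof I have in mind is essentially a direct translation between two results that have already appeared in the excerpt, so the plan is to unpack the definitions and combine them. First I would fix a non-zero polynomial $f \in \mathbb{C}[X]$ of degree less than $p$, viewed as an element of $R = \mathbb{C}[X]/(X^p-1)$, and record the key dimension formula from Lemma~\ref{lem-dim-roots}(2), namely $\dim(I_f) = p - \deg(\gcd(f, X^p-1)) = p - \Z(f)$.

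Next I would reconcile the two appearances of $\Z(f)$. Because $\mathbb{C}$ has characteristic $0$, different from $p$, the polynomial $X^p - 1$ is separable, and as the excerpt notes just after Lemma~\ref{lem-dim-roots}, this means $\Z(f) = \deg(\gcd(f, X^p - 1))$ coincides with the number of $p$-th roots of unity $\xi$ satisfying $f(\xi) = 0$. That is exactly the quantity $\Z(f)$ appearing in Theorem~\ref{up-abelian2}, so the two usages are compatible without any extra argument.

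With this identification in place, the final step is simply to apply Theorem~\ref{up-abelian2}, which gives $\Z(f) \leq \wt(f) - 1$. Substituting into the dimension formula yields
$$
\dim(I_f) = p - \Z(f) \geq p - (\wt(f) - 1) = p + 1 - \wt(f),
$$
which rearranges to $\wt(f) + \dim(I_f) \geq p + 1$, as claimed.

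I do not expect any real obstacle here: the entire content of Theorem~\ref{up-abelian3} is already packaged in Theorem~\ref{up-abelian2} together with Lemma~\ref{lem-dim-roots}, and the proof is purely a matter of matching notation. The genuinely substantive step, which is the sharp inequality $\Z(f) \leq \wt(f) - 1$ itself (equivalent to Chebotarev's theorem on roots of unity), is handled elsewhere in the paper and is not part of what Theorem~\ref{up-abelian3} asks one to prove.
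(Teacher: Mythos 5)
Your proposal is correct and follows exactly the paper's own derivation: it combines Lemma~\ref{lem-dim-roots}(2), the identification $\dim(I_f)=p-\Z(f)$, and the inequality $\Z(f)\leq\wt(f)-1$ from Theorem~\ref{up-abelian2}. The paper presents this as a direct reformulation in the paragraph preceding the statement, and you have reproduced that argument faithfully, including the (correctly noted) need for separability of $X^p-1$ over $\mathbb{C}$ to equate the two meanings of $\Z(f)$.
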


We conclude this section by showing how this interpretation of the
uncertainty principle gives a good family of cyclic codes over
$\mathbb{C}$:

\begin{cor}
  There exists a family of good cyclic codes over $\mathbb{C}$.
\end{cor}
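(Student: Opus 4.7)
The plan is to apply the reformulated uncertainty principle of Theorem \ref{up-abelian3} to deduce a Singleton-type bound $d \geq p+1-k$ for every cyclic code of length $p$ and dimension $k$ over $\Cc$, then choose ideals whose dimension is roughly $p/2$.

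For each prime $p$, I would work in $R = \Cc[X]/(X^p-1)$. Since $\Cc$ is algebraically closed, Proposition \ref{decom}(2) gives $R \cong \Cc^p$ as $\Cc$-algebras, so for every integer $k$ with $0 \leq k \leq p$ there exists an ideal $C \subset R$ of $\Cc$-dimension exactly $k$ (take any direct sum of $k$ of the $p$ simple summands).

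Now fix such a $C$ of dimension $k$ and let $\alpha \in C$ be non-zero, represented by a polynomial of degree $<p$. Since $C$ is an ideal containing $\alpha$, the principal ideal $I_\alpha = (\alpha)$ lies inside $C$, and hence $\dim(I_\alpha) \leq k$. Applying Theorem \ref{up-abelian3} to $\alpha$ then gives
\[
\wt(\alpha) \geq p+1-\dim(I_\alpha) \geq p+1-k.
\]
Minimizing over non-zero $\alpha \in C$ yields $d(C) \geq p+1-k$, which is the Singleton bound.

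Finally, take $k_p = \lfloor p/2 \rfloor$ and let $C_p$ be any ideal of $R$ of dimension $k_p$. Then $k_p/p \to 1/2$ and $d(C_p)/p \geq (p+1-k_p)/p \to 1/2$, so any constant $c < 1/2$ works for all sufficiently large primes $p$; the sequence $(C_p)$, indexed by primes, is a good family of cyclic codes over $\Cc$ (and is essentially the Reed--Solomon family). There is no real obstacle: all the content is concentrated in Theorem \ref{up-abelian3}, and the remainder is a formal translation of \emph{uncertainty principle} into \emph{distance bound}, which is precisely the reason why reproducing even a weak analogue of this bound over $\bar{\mathbb{F}}_\ell$ would suffice to construct good cyclic codes in positive characteristic.
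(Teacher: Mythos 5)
Your proposal is correct and follows essentially the same route as the paper: both select an ideal $C_p$ of dimension about $p/2$, note that for nonzero $\alpha\in C_p$ one has $I_\alpha\subseteq C_p$ so $\dim I_\alpha\leq\dim C_p$, and then apply Theorem~\ref{up-abelian3} to conclude $\wt(\alpha)\geq p+1-\dim C_p$. The only cosmetic difference is that the paper exhibits the specific ideal generated by $\prod_{i=1}^{(p-1)/2}(X-\xi^i)$ (dimension $(p+1)/2$), whereas you observe that the argument yields a Singleton-type bound valid for \emph{any} ideal and then take an arbitrary one of dimension $\lfloor p/2\rfloor$; both are Reed--Solomon codes and both give a good family.
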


\begin{proof}
  Let $ \xi = e^{\frac{2\pi i}{p}} \in \mathbb{C}$, and define
$$
f = \prod_{i=1}^{\frac{p-1}{2}} (X-\xi^i) \in \mathbb{C}[X].
$$
Since $f|(X^p-1)$, we have $\dim(I_f) = p - \deg(f) = \frac{p+1}{2}$
by Lemma~\ref{lem-dim-roots} (2).
\par
Let then $h\not=0$ be an element of $I_f$. We then have
$\dim(I_h)\leq \dim(I_f)$, so that
$$
\wt(h) \geq p+1 - \dim(I_h)\geq p+1 - \dim(I_f)= \frac{p+1}{2}
$$
by Theorem \ref{up-abelian3}. The ideal $C_p=I_f$ is therefore a
$[p,\frac{p+1}{2},\frac{p+1}{2}]_{\mathbb{C}}$-cyclic code, and the
family $\{C_p\}_{p\text{ prime}}$ is a good family of cyclic codes.
\end{proof}

The codes we have ``found'' in this proof are special cases of the
famous Reed-Solomon codes (see, e.g.,~\cite[\S 5.2]{roth}).

\section{Uncertainty principle for general fields} \label{s-4}

\subsection{General statements}

The formulation of the uncertainty principle in
Theorems~\ref{up-abelian2}, in the form of the inequality~(\ref{4})
and in Theorem \ref{up-abelian3}, through~(\ref{eq-up3}), make sense
for all fields.  As we will see later, these statements are not true
in such generality, but they might be true, and useful, in some weaker
form. For this reason, we make the following definition.

\begin{dfn}
  Let $F$ be a field, $p$ a prime number and $R=F[X]/(X^p-1)$. For
  $f\in R$, represented by a polynomial of degree $< p$, we denote by
  $I_f$ the ideal generated by $f$ in $R$, and we denote
$$
\mu_{F,p}(f) = \wt(f) + \dim(I_f).
$$ 
We then define the invariant
$$
\mu_{F,p}=\min\{\mu_{F,p}(f)|0 \ne f\in R\}.
$$
We will sometimes write $\mu(f)$ instead of $\mu_{F,p}(f)$, when the
field and prime involved are clear in context.
\end{dfn}

Here are some simple observations:
\begin{itemize}
\item If $E/F$ is a field extension and $f \in F[X]/(X^p-1)$, then
  $\mu_{F,p}(f)=\mu_{E,p}(f)$ for any prime number $p$.  In
  particular, it follows that $\mu_{E,p} \leq \mu_{F,p}$ for each $p$.
\item For $f = 1+X+\ldots+X^{p-1}$, we have $\wt(f)=p$ and
  $\dim(I_f)=1$. It follows that $\mu_{F,p} \leq p+1$ for any field
  $F$ and any prime $p$.
\item According to the uncertainty principle for $F=\Cc$ (Theorems
  \ref{up-abelian}, \ref{up-abelian2} and \ref{up-abelian3}), we have
  $\mu_{\mathbb{C},p} = p+1$ for every prime $p$.
\end{itemize}

So for any field  we can state the uncertainty principle as follows:

\begin{dfn}[Uncertainty principle]\label{def-up}
  A field $F$ is said to satisfy the uncertainty principle if, for any
  prime number $p$, we have $\mu_{F,p}>p$, or equivalently if
  $\mu_{F,p} = p+1$, for all $p$.
\end{dfn}

As we shall see in \S 4.2, the uncertainty principle does not hold in
general, but let us start with some positive results:

\begin{pro} \label{artin} Let $F=\mathbb{F}_{\ell}$ be the finite
  field of prime order $\ell$ and assume that $\ell$ is a primitive
  root modulo $p$, i.e., that $\ord_p(\ell)=p-1$.  Then
  $\mu_{F,p} = p+1$.
\end{pro}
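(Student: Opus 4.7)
The plan is to exploit the very simple structure of $R=\mathbb{F}_\ell[X]/(X^p-1)$ under the hypothesis that $\ell$ is a primitive root modulo $p$. By Proposition~\ref{decom}(3), with $r=p-1$ and $s=1$, we have $R\cong \mathbb{F}_\ell\oplus \mathbb{F}_{\ell^{p-1}}$, so $R$ has only four ideals in total. Concretely, $X^p-1$ factors in $\mathbb{F}_\ell[X]$ as $(X-1)\Phi_p(X)$, where $\Phi_p=1+X+\cdots+X^{p-1}$ is irreducible (its roots generate $\mathbb{F}_{\ell^{p-1}}$ over $\mathbb{F}_\ell$). Hence by Lemma~\ref{lem-dim-roots}, for any nonzero $f\in R$ the ideal $I_f$ is one of the three nonzero ideals $R$, $(X-1)$, $(\Phi_p)$, of respective dimensions $p$, $p-1$, and $1$.

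I would then verify the bound $\mu_{F,p}(f)\geq p+1$ case by case. If $I_f=R$, then trivially $\wt(f)+\dim(I_f)\geq 1+p=p+1$. If $I_f=(\Phi_p)$, then $f$ must be a scalar multiple of $\Phi_p$ (since this ideal has dimension $1$), hence $\wt(f)=p$ and $\wt(f)+\dim(I_f)=p+1$. The remaining case $I_f=(X-1)$, of dimension $p-1$, requires showing $\wt(f)\geq 2$; this follows from the observation that $X$ is a unit in $R$ (as $X\cdot X^{p-1}=X^p=1$), so any monomial $cX^i$ with $c\neq 0$ is a unit and generates all of $R$, contradicting $I_f=(X-1)\neq R$.

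Combined with the upper bound $\mu_{F,p}\leq p+1$ already noted in the text (via $f=1+X+\cdots+X^{p-1}$), this gives $\mu_{F,p}=p+1$.

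There is no serious obstacle here: the hypothesis that $\ell$ is a primitive root makes the cyclotomic factor $\Phi_p$ irreducible over $\mathbb{F}_\ell$, which collapses the problem to the trivial case of a ring with only two nonzero proper ideals. The only point deserving a sentence of care is the observation that monomials are units, which rules out $\wt(f)=1$ in the middle case.
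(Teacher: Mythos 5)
Your proof is correct and follows essentially the same route as the paper: both reduce the problem via the irreducibility of $\Phi_p=(X^p-1)/(X-1)$ to a three-case analysis on $\dim(I_f)\in\{p,p-1,1\}$, with the middle case handled by observing that a nonzero monomial cannot generate a proper ideal. Your phrasing of that last point in terms of $X$ being a unit in $R$ is a minor cosmetic variant of the paper's observation that $X-1\nmid cX^i$; the substance is identical.
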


\begin{proof}
  Let $\xi \ne 1$ be a primitive $p$-th root of unity in
  $\bar{\mathbb{F}}_{\ell}$. As recalled in Section~\ref{ssec-ideals},
  the extension $\mathbb{F}_{\ell}(\xi)/\mathbb{F}_{\ell}$ is then of
  degree $\ord_p(\ell)=p-1$. This implies that the polynomial
  $\frac{X^p-1}{X-1} = 1+X+\ldots+X^{p-1}$ is irreducible over
  $\mathbb{F}_{\ell}$. In particular, for every polynomial
  $f\in \mathbb{F}_{\ell}[X]$ of degree less then $p$, the gcd of $f$
  and $X^p-1$ can only be one of $1$, $X-1$ or $(X^p-1)/(X-1)$. Then
  the dimension $\dim(I_f) = p-\deg(\gcd(f,X^p-1))$ is equal to $p$,
  $p-1$ or $1$, respectively (Lemma~\ref{lem-dim-roots} (2)).

  We consider each case in turn and show that $\mu(f)\geq p+1$ in any
  case. If $\dim(I_f)=p$, then since $\wt(f) \geq 1$ (because
  $f\ne 0$), we get $\mu(f) \geq p+1$.  If $\dim(I_f) = p-1$, then we
  have $\gcd(f,X^p-1) = X-1$, so $X-1\mid f$. Since the only non-zero
  polynomials of weight $1$ are monomials $cX^i$ with $c\not=0$, and
  $X-1\nmid cX^i$ for $0\leq i<p$, we must have $\wt(f)\geq 2$, and
  therefore $\mu(f)\geq p-1+2=p+1$.  Finally, if $\dim(I_f)=1$, then
  we have $f =c \sum_{i=0}^{p-1}X^i$ for some $c\not=0$, and then
  $\wt(f)=p$ and $\mu(f)=p+1$.
\end{proof}

Another case is the following claim (which appears also in \cite[Lemma~2]{F} and \cite[Lemma~6.5]{GGI}), 
that we will use later:

\begin{pro} \label{p,p} Let $p$ be a prime and let $F$ be a field of
  characteristic $p$. Then we have $\mu_{F,p} = p + 1$.
\end{pro}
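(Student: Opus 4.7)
The plan is to exploit the defining feature of characteristic $p$: the Freshman's dream gives $X^p-1=(X-1)^p$, so $R=F[X]/(X^p-1)$ is a local ring whose maximal ideal is generated by $X-1$. Consequently, for any nonzero $f\in F[X]$ of degree less than $p$, the gcd with $X^p-1$ has the form $(X-1)^k$ for some integer $k$ with $0\leq k\leq p-1$, and Lemma~\ref{lem-dim-roots}(2) gives $\dim(I_f)=p-k$. The target $\mu(f)\geq p+1$ therefore reduces to the statement $\wt(f)\geq k+1$, i.e.\ that $(X-1)^k\mid f$ forces $f$ to have strictly more than $k$ nonzero terms.

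I would prove this reformulation by a change of variables $Y=X-1$ followed by a linear algebra argument. Writing $f=\sum_{i\in S}a_iX^i$ with $S=\supp(f)$ and $w=\wt(f)=|S|$, and expanding $X^i=(1+Y)^i=\sum_j\binom{i}{j}Y^j$, divisibility of $f$ by $(X-1)^k$ translates to $\sum_{i\in S}a_i\binom{i}{j}=0$ for $j=0,1,\ldots,k-1$. Assuming for contradiction that $w\leq k$, the first $w$ of these equations form a homogeneous system in the $w$ unknowns $(a_i)_{i\in S}$ with coefficient matrix $M=(\binom{i}{j})_{i\in S,\,0\leq j\leq w-1}$, and it suffices to show that $M$ is invertible, since this then forces all $a_i=0$, contradicting $f\ne 0$.

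The main (mild) obstacle is verifying invertibility of $M$ in characteristic $p$. Since $j!\,\binom{X}{j}=X(X-1)\cdots(X-j+1)$ is a monic polynomial of degree $j$ in $X$, elementary row operations reduce $M$ to a classical Vandermonde matrix and yield
$$\det(M)=\frac{1}{\prod_{j=0}^{w-1}j!}\prod_{1\leq s<t\leq w}(i_t-i_s),$$
where $S=\{i_1<\cdots<i_w\}$. Both factors are nonzero in $F$: each $j!$ with $j\leq w-1\leq k-1\leq p-2$ is a product of positive integers strictly less than $p$, hence coprime to $p$; and each difference $i_t-i_s$ of distinct elements of $\{0,1,\ldots,p-1\}$ is nonzero modulo $p$. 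Thus $\det(M)\ne 0$ in $F$, giving $\wt(f)\geq k+1$ and therefore $\mu(f)\geq p+1$ for every nonzero $f$. Combined with the upper bound $\mu_{F,p}\leq p+1$ from the polynomial $1+X+\cdots+X^{p-1}$ noted just before the statement, we conclude $\mu_{F,p}=p+1$.
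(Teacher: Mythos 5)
Your proof is correct, and it takes a genuinely different route from the paper's. Both arguments begin with the same reduction: since $X^p-1=(X-1)^p$ in characteristic $p$, one must show that $(X-1)^k\mid f$ with $\deg f<p$ forces $\wt(f)\geq k+1$. The paper then proceeds by induction on $\deg(f)$: if $f(0)=0$, divide by $X$ (preserving weight and the divisibility exponent); if $f(0)\ne 0$, pass to the derivative $f'$, which satisfies $(X-1)^{k-1}\mid f'$ and $\wt(f)=\wt(f')+1$. You instead change variables to $Y=X-1$, recast the divisibility condition as the vanishing of the first $k$ Taylor coefficients, i.e.\ the linear system $\sum_{i\in S}a_i\binom{i}{j}=0$ for $0\leq j<k$, and show the relevant $w\times w$ coefficient matrix of binomial coefficients is nonsingular by a column (or row) reduction to a Vandermonde determinant, using that all the integers $j!$ and $i_t-i_s$ involved lie in $(0,p)$ and hence are units mod $p$. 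Your linear-algebra argument is more conceptual — it exposes the confluent-Vandermonde structure hiding behind the statement and is in fact closer in spirit to the Chebotarev-minor reformulation the paper discusses in its Appendix — whereas the paper's inductive proof is shorter and avoids any determinant computation. One small stylistic point: it is worth stating explicitly that the identity $\det(M)\cdot\prod_{j=0}^{w-1}j!=\prod_{s<t}(i_t-i_s)$ holds over $\mathbb{Z}$, so that the non-vanishing of $\det(M)$ in $\mathbb{F}_p$ follows from both sides being coprime to $p$; you handle this correctly but it deserves emphasis since $\tfrac{1}{j!}$ is not literally an integer.
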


\begin{proof}
  By Lemma~\ref{lem-dim-roots} (2), we need to show that for any
  $0\ne f \in F[X]/(X^p-1)$, we have
$$
\wt(f) > p-\dim(I_f)=\deg(\gcd(f,X^p-1)).
$$ 
Since $F$ has characteristic $p$, we have $X^p-1=(X-1)^p$, which means
that there exists some integer $m$ with $0\leq m<p$ such that
$\gcd(f,X^p-1)= (X-1)^m$.  So we need to prove that for a polynomial
$f$ with $(X-1)^m|f$, we have $\wt(f)> m$.

We proceed by induction on $\deg(f)<p$. In the base case $\deg(f)=0$,
we have $f=c \ne 0$. Then $X-1 \nmid f$, so that $m=0$ and
$\wt(f)=1>m$, as claimed. 

Now assume that the property is valid for all polynomials of degree
$<\deg(f)$ and that $(X-1)^m|f$.  If $f(0)=0$, we deduce that
$(X-1)^m|f(X)/X$, hence by induction we obtain
$m < \wt(f/X) = \wt(f)$.  If $f(0)\ne 0$, on the other hand, then we
consider the derivative $f'$ of $f$. From $(X-1)^m\mid f$, it follows
that $(X-1)^{m-1}\mid f'$: indeed, writing $f=f_1(X-1)^m$ and
differentiating, we get $f'=f'_1(X-1)^m+mf_1(X-1)^{m-1}$, which is
divisible by $(X-1)^{m-1}$. By induction, we therefore get
$\wt(f')>m-1$. But then, since $f(0)\ne 0$ and $m<p$, we have
$\wt(f) = \wt(f')+1 > m$, as needed.
\end{proof}

\subsection{Fields of characteristic zero}

We will now present a proof (following \cite{GGI}) of the uncertainty
principle for any field $F$ of characteristic zero.  Note that
Theorems \ref{up-abelian}, \ref{up-abelian2} and \ref{up-abelian3} are
special cases of this result, where the field is $\mathbb{C}$.  Since
it is elementary that we need only prove the uncertainty principle for
finitely generated fields $F$, and since such a field $F$ of
characteristic $0$ can be embedded into $\mathbb{C}$, we could simply
deduce the result from the case of $\Cc$. We give a complete proof
anyway.

The next lemma is the key step in the proof.

\begin{lem}[Specialization] \label{spec} Let $p$ be a prime, $F$ a
  field of characteristic $0$, and
$$
f = \sum_{i=0}^{p-1} a_i X^i
$$
a non-zero element of $R=F[X]/(X^p-1)$.  Then for every prime number
$q$, there exists a field $E$ of characteristic $q$ and a polynomial
$\tilde{f}\in E[X]/(X^p-1)$ such that $\wt(\tilde{f})\leq \wt(f)$ and
$\dim_E(I_{\tilde{f}}) \leq \dim_F(I_f)$.
\end{lem}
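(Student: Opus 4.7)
The plan is to realize $f$ over an integral model inside $F$ and reduce it modulo a place of residue characteristic $q$. The valuation ring of such a place will serve as the integer model; the main obstacle will be ensuring that the greatest common divisor of $f$ with $X^p-1$ does not lose degree under the reduction.

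Since $\charac(F)=0$, we have $\mathbb{Q}\subset F$, so the $q$-adic valuation on $\mathbb{Q}$ extends to a valuation $v$ on $F$ by Chevalley's extension theorem. Let $\mathcal{O}\subset F$ be its valuation ring, $\mathfrak{m}$ its maximal ideal, and $E=\mathcal{O}/\mathfrak{m}$ the residue field, which has characteristic $q$ because $q\in\mathfrak{m}$. Setting $v_0=\min\{v(a_i):a_i\ne 0\}$ and choosing $c\in F^{\times}$ with $v(c)=-v_0$, we get $cf\in\mathcal{O}[X]$ with at least one coefficient in $\mathcal{O}^{\times}$, while $\wt(cf)=\wt(f)$ and $I_{cf}=I_f$. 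After replacing $f$ by $cf$, let $\tilde{f}\in E[X]/(X^p-1)$ be the coefficient-wise reduction of $f$ modulo $\mathfrak{m}$; it is nonzero because some coefficient is a unit, and $\wt(\tilde{f})\leq \wt(f)$ since reduction can only annihilate nonzero entries.

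It remains to compare the dimensions. Let $g=\gcd(f,X^p-1)\in F[X]$, taken monic, of degree $d=p-\dim_F(I_f)$ by Lemma~\ref{lem-dim-roots}. Since $g$ is a monic divisor of the monic polynomial $X^p-1\in\mathcal{O}[X]$, its roots in an algebraic closure of $F$ are integral over $\mathcal{O}$, and hence the coefficients of $g$, being symmetric functions of those roots, are themselves integral over $\mathcal{O}$; as the valuation ring $\mathcal{O}$ is integrally closed, this forces $g\in\mathcal{O}[X]$. Polynomial division by the monic $g$ then yields $f=gf_1$ and $X^p-1=gh$ with $f_1,h\in\mathcal{O}[X]$. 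Reducing modulo $\mathfrak{m}$, the image $\tilde{g}$ is still monic of degree $d$ and divides both $\tilde{f}$ and $X^p-1$ in $E[X]$, so $\deg(\gcd(\tilde{f},X^p-1))\geq d$, and one further application of Lemma~\ref{lem-dim-roots} gives $\dim_E(I_{\tilde{f}})\leq p-d=\dim_F(I_f)$.

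The non-routine ingredient, and hence the main obstacle, is the Gauss-type integrality argument that forces $g\in\mathcal{O}[X]$: without it, the reduced $\tilde{g}$ could drop in degree and the dimension estimate would fail. The other standard inputs are the existence of a valuation ring of $F$ with prescribed residue characteristic, and the rescaling by $c$ that prevents $\tilde{f}$ from collapsing to zero.
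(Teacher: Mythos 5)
Your proof is correct, and it takes a genuinely different route from the paper's. The paper first invokes Hilbert's Nullstellensatz to replace $F$ by a number field $K_1$ (mapping the finitely generated $\mathbb{Q}$-algebra $A=\mathbb{Q}[a_0,\dots,a_{p-1}]$ into $\bar{\mathbb{Q}}$ while preserving nonvanishing of the relevant coefficients), then uses a maximal ideal of $\mathcal{O}_{K_1}$ above $q$ together with a clearing-denominators element $t$. You instead work directly inside $F$: Chevalley's extension theorem produces a valuation ring $\mathcal{O}\subset F$ with residue characteristic $q$, and rescaling by $c=a_{i_0}^{-1}$ (where $v(a_{i_0})$ is minimal) normalizes $f$ into $\mathcal{O}[X]$ with a unit coefficient. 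This avoids the number-field reduction entirely and also makes the dimension comparison more transparent: your Gauss-type integrality argument — roots of the monic divisor $g=\gcd(f,X^p-1)$ of $X^p-1$ are integral over $\mathcal{O}$, hence $g\in\mathcal{O}[X]$ since $\mathcal{O}$ is integrally closed, hence $\tilde{g}$ is a monic degree-$d$ common divisor of $\tilde{f}$ and $X^p-1$ — is precisely the missing justification for the one-line inequality $\deg\gcd(tf,X^p-1)\leq \deg\gcd(\tilde{f},X^p-1)$ that the paper's sketch asserts in step (4). In short: the paper's route buys concreteness (everything happens in a number field), while yours is shorter, stays over $F$, and fills in the key degree estimate that the paper leaves implicit. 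One small remark: as in the paper's own argument, it is worth noting explicitly that your construction yields $\tilde{f}\neq 0$, since that is what is used later in the proof of Theorem~\ref{up-0}, even though the lemma as stated does not require it.
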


\begin{proof}[Sketch of the proof:]
\phantom{c}  
\par
\begin{enumerate}
\item Since $\charac(F) = 0$, the field $\mathbb{Q}$ is a subfield of
  $F$. Let $A = \mathbb{Q}[a_0,\ldots,a_{p-1}]$, which is a
  $\mathbb{Q}$-subalgebra of $F$.  By Hilbert's Nullstellensatz, the
  homomorphisms $\phi\colon A\to \bar{\mathbb{Q}}$,
  where $\bar{\mathbb{Q}}$ is the algebraic closure of $\mathbb{Q}$, 
  separate the points of $A$, and therefore 
  there exists a morphism $\phi : A \rightarrow \bar{\mathbb{Q}}$,
    such that $\phi(a_i) \ne 0$ for every $i$, with $0\leq i\leq p-1$,
  such that $a_i\not=0$.  Let $K_1$ be the number field (a finite
  extension of $\mathbb{Q}$) generated by the image of $\phi$ and
  $f_1$ the polynomial
$$
f_1 = \sum_{i=0}^{p-1}\phi(a_i)X^i\in K_1[X].
$$
Then by the definition of $K_1$, we have $\wt(f_1)=\wt(f)$. Moreover,
$\phi$ induces an isomorphism between the $p$-th roots of unity in
$\bar{K}$ and those in $\bar{\mathbb{Q}}$, so that $\Z(f)=\Z(f_1)$
also. This means that we may replace $K$ and $f$ by $K_1$ and $f_1$,
and reduce to the case where $K$ is a number field.
\item Let $\mathcal{O}_K$ be the ring of integers of $K$, and
  $\mathfrak{m}$ a maximal ideal in $\mathcal{O}_K$ that contains
  $q \in \mathbb{Z} \subset \mathcal{O}_K$.  Then
  $E=\mathcal{O}_K/\mathfrak{m}$ is a finite field of characteristic
  $q$.
\item Let $t\in \mathcal{O}_K$ be a non-zero integer such that
  $ta_i \in \mathcal{O}_K$ for all $i$, and such that there exists
  some $i$ such that $ta_i\notin \mathfrak{m}$ (this exists because
  not all $a_i$ are zero). Then, if $\tilde{f}$ is the image of $tf$
  under the reduction map from $\mathcal{O}_K$ to $E$, we have
  $\tilde{f}\ne 0$ in $E[X]$, and $\tilde{f}$ is a polynomial of
  degree $ < p$.
\item By construction, we have $\wt(\tilde{f})\leq \wt(f)$. On the
  other hand, we get
\begin{align*}
  \dim_F I_f \geq \dim_F I_{tf} &=p-\deg(\gcd(tf,X^p-1)) \\
             &\geq
               p-\deg(\gcd(\tilde{f},X^p-1))=\dim_{E} I_{\tilde{f}}.
\end{align*}
\end{enumerate}
\end{proof}

\begin{thm} \label{up-0} For every field $F$ of characteristic $0$ and
  every prime $p$, we have $\mu_{F,p} = p+1$, i.e., the uncertainty
  principle is true over any field of characteristic $0$.
\end{thm}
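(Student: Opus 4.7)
The plan is to combine the two preceding results: the Specialization Lemma~\ref{spec} reduces the characteristic zero statement to a positive characteristic statement, and Proposition~\ref{p,p} already establishes the uncertainty principle in characteristic $p$. The key observation is that although Lemma~\ref{spec} allows specialization to \emph{any} prime $q$, we are free to choose $q=p$ (the same prime appearing in $X^p-1$), which is precisely the case handled by Proposition~\ref{p,p}.

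In more detail, the first step is to note that the bound $\mu_{F,p}\leq p+1$ is free, as remarked just before Definition~\ref{def-up} (via $f=1+X+\cdots+X^{p-1}$), so only the opposite inequality requires work. Let $F$ be a field of characteristic $0$, $p$ a prime, and $0\neq f\in R=F[X]/(X^p-1)$. Applying Lemma~\ref{spec} with $q=p$, we obtain a field $E$ of characteristic $p$ and a nonzero polynomial $\tilde f\in E[X]/(X^p-1)$ satisfying
\[
\wt(\tilde f)\leq \wt(f)\qquad\text{and}\qquad \dim_E I_{\tilde f}\leq \dim_F I_f.
\]
Adding these two inequalities yields $\mu_{E,p}(\tilde f)\leq \mu_{F,p}(f)$.

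The second step is to invoke Proposition~\ref{p,p} in characteristic $p$: since $\tilde f\neq 0$, we have
\[
\mu_{E,p}(\tilde f)\geq \mu_{E,p}=p+1.
\]
Chaining the two inequalities gives $\mu_{F,p}(f)\geq p+1$, and since this holds for every nonzero $f\in R$, we conclude $\mu_{F,p}\geq p+1$, hence equality.

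There is essentially no obstacle beyond verifying that the hypotheses of the two inputs are satisfied; the substantive work has already been done in Lemma~\ref{spec} (reducing to a number field via Hilbert's Nullstellensatz and then clearing denominators to reduce modulo a maximal ideal above $q$) and in Proposition~\ref{p,p} (the inductive argument using $X^p-1=(X-1)^p$). The conceptual point worth emphasizing in the write-up is that specializing to the \emph{same} prime $p$ is what makes the argument work: for any other prime $q$, the factorization of $X^p-1$ over $E$ can be complicated, but in characteristic $p$ the polynomial collapses to $(X-1)^p$, which is exactly what Proposition~\ref{p,p} exploits.
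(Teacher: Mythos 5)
Your proof is correct and follows essentially the same route as the paper: specialize via Lemma~\ref{spec} with $q=p$ to a field $E$ of characteristic $p$, so that $\mu_{E,p}(\tilde f)\leq\mu_{F,p}(f)$, then invoke Proposition~\ref{p,p} to get $\mu_{E,p}(\tilde f)\geq p+1$. The only (welcome) addition is your explicit remark that the upper bound $\mu_{F,p}\leq p+1$ comes for free from $1+X+\cdots+X^{p-1}$.
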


\begin{proof}
  Let $F$ be a field of characteristic zero, and let $p$ be a
  prime. Let $f \in F[X]/(X^p-1)$ be non-zero. By the Specialization
  Lemma \ref{spec} with $q=p$, there exists a field $E$ of
  characteristic $p$ and a non-zero element
  $\tilde{f} \in E[X]/(X^p-1)$ such that
  $\mu_{E,p}(\tilde{f}) \leq \mu_{F,p}(f)$.  Because $E$ has
  characteristic $p$, Proposition \ref{p,p} implies that
  $\mu_{F,p}(f) \geq \mu_{E,p}(\tilde{f}) > p$. Since this holds for all
  $f$, the result follows.
\end{proof}

\subsection{Counter examples to the uncertainty principle over finite
  fields}

Specific examples of finite fields $F$ for which the uncertainty
principle of Definition~\ref{def-up} does \emph{not} hold over a
finite field $F$ are given in \cite{GGI}. One such example is
$F=\mathbb{F}_2$. If we take $p=7$ and
$f=X^3+X+1 \in \mathbb{F}_2[X]/(X^7-1)$, then we have
$$
X^7-1 = (X-1)(X^3+X^2+1)(X^3+X+1),
$$ 
hence $\dim(I_f) = 4$ while $\wt(f)=3$, so that
$\mu_{\mathbb{F}_2,7} \leq 7$.

The next counter-examples to the naive uncertainty principal for
finite fields were suggested to us by Madhu Sudan.

Let $q< p$ be two different primes, and $r=\ord_p(q)$. Let
$F=\mathbb{F}_{q}$ and $E =\mathbb{F}_{q^r}$, so that $E$ contains all
the $p$-th roots of unity.  Moreover, $E$ is generated as an
$F$-vector space by the $p$-th roots of unity.   We consider the trace polynomial
$$
T= \sum_{i=0}^{r-1} X^{q^i}\in F[X].
$$
A basic but crucial observation is that the function from $E$ to $E$
defined by the trace polynomial $T$ is a surjective $F$-linear map
from $E$ to the subfield $F$, which we denote $\mathrm{tr}$. In
particular, $\mathrm{tr}$ is not identically zero on $E$, and since
the $p$-th roots of unity generate $E$ as $F$-vector space, this means
that $T$ is not identically zero on the $p$-th roots of unity.

By the pigeon-hole principle, there exists some $\alpha \in F$ such
that at least $\frac{p}{q}$ of the $p$-th roots of unity in $E$ are
roots of $T+\alpha$. Let then $f=T+\alpha\in F[X]$.  Then we have
$$
\mu_{F,p}(f) = \wt(f) + \dim_F(I_f) \leq r+1 +
\Bigl(1-\frac{1}{q}\Bigr)p
$$
(using the interpretation of $\dim_F(I_f)$ as the number of roots of
unity where $f$ does not vanish), and consequently
$$
\mu_{E,p} \leq \mu_{F,p}\leq p+1 + r - \frac{p}{q}.
$$
In particular, if $r=\ord_p(q) < \frac{p}{q}$, we obtain a counter
example to the uncertainty principle for the field
$E=\mathbb{F}_{q^r}$.  

There exist infinitely many pairs of primes with this property. For
instance, take $q=2$ and let $p$ be a prime such that the Legendre
symbol $(\tfrac{2}{p})$ is equal to $1$. Then $q=2$ is a square modulo
$p$, which implies that $2^{(p-1)/2}\equiv 1\bmod p$, hence that the
order of $2$ modulo $p$ is $\leq (p-1)/2<p/2=p/q$.

More generally, fix the prime $q$ and take any prime $\ell>q$. By
Lemma~\ref{lm-chebo}, if $p$ is any prime that is totally split in the
Galois extension
$K_{\ell}=\mathbb{Q}(e^{2i\pi/\ell}, \sqrt[\ell]{q})$, we have
$\ord_p(2)\leq (p-1)/\ell<p/q$. It is a well-known consequence of the
Chebotarev density theorem that there are infinitely such primes.



In anticipation of the next section, we note however that, for any
pair $q<p$ with $r<p/q$, it still remains true that
$$
\mu_{F,p}(f)\geq p+1 +r-\frac{p}{q}\geq \frac{p}{2},
$$
or in other words, the uncertainty principle for $f$ does not fail
drastically.

\section{The weak uncertainty principle} \label{s-5} 

\subsection{Statement}\label{sec-wup}

The uncertainty principle in its current version over $\mathbb{C}$
states that for each prime $p$, we have $\mu_{\mathbb{C}}(p) > p$.  We
have seen that this inequality does not always hold if $\mathbb{C}$ is
replaced by any field.  Because of the link with good cyclic codes, we
introduce a weaker version:

\begin{dfn}[Weak uncertainty principle]\label{def-wup}
  Let $\delta$ be a real number such that $0<\delta\leq 1$. We say
  that a field $F$ satisfies the \emph{$\delta$-uncertainty principle}
  for a prime $p$ if
\begin{equation}
  \mu_{F,p} > \delta \cdot p.
\end{equation}
\end{dfn}

This variant of the uncertainty principle is weaker than the one in
the previous section in two respects: the lower bound for $\mu_{F,p}$
is relaxed, and it is stated with respect to an individual prime $p$,
and not all of them.

\begin{exem}
  We first present some finite fields that satisfy the weak
  uncertainty principle for certain primes. Let $\ell$ be a prime
  number, and let $P$ be an infinite set of primes such that $\ell$ is
  a primitive root in $\mathbb{F}_{p}^*$ for all $p\in P$.  As we have
  already mentioned, Artin's Conjecture asserts that such a set $P$
  exists for any prime $\ell$, and Hooley~\cite{hooley} confirmed this
  under a suitable form of the Generalized Riemann Hypothesis. By
  Proposition \ref{artin}, we have $\mu_{\mathbb{F}_{\ell}}(p) > p$,
  for any $p\in P$, and hence the weak uncertainly principle is
  satisfied by the field $\mathbb{F}_{\ell}$ for any prime in $P$.

  This example does not however lead to good cyclic codes. Indeed, if
  we consider proper ideals
  $I_p\subset
  \mathbb{F}_{\ell}[\Zz/p\Zz]=\mathbb{F}_{\ell}[X]/(X^p-1)$ for
  $p\in P$, the fact that $\ell$ is a primitive root modulo $p$ means
  that $I_p$ is generated either by $X-1$ or by $(X^p-1)/(X-1)$. In
  the first case, we have $\dim I_p=p-1$, but the element $X-1$ has
  weight $2$, so that the distance of the code $I_p$ is $2$. In the
  second case, we have $\dim I_p=1$. In either case, the codes
  corresponding to $I_p$ are not good as $p\to+\infty$ in $P$ since
  one of the inequalities in~(\ref{eq-good-code}) fails.
\end{exem}

This example motivates our last variant of the uncertainty principle.

\begin{dfn}[Weak uncertainty principle, 2]\label{def-wup-2}
  Let $\delta$ and $\epsilon$ be real numbers such that
  $0<\delta\leq 1$ and $0<\epsilon<\delta$. We say that a field $F$ of 
  size $\ell$ satisfies the \emph{$(\epsilon,\delta)$-uncertainty principle} 
  if there exists an infinite set of primes $P$ such that, for all primes
  $p\in P$, the two following conditions holds:
\begin{enumerate}
\item We have $\mu_{F,p} > \delta p$,
\item We have $\ord_{p}(\ell)<\epsilon p$.
\end{enumerate}
\end{dfn}

The existence of finite fields $F$ which satisfy such an uncertainty
principle implies the existence of good cyclic codes over $F$:

\begin{thm} \label{good} Let $F=\mathbb{F}_{\ell}$ be a finite field
  prime order $\ell$.  Assume there exist real numbers
  $0<\epsilon<\delta<1$ such that $F$ satisfies the
  $(\epsilon,\delta)$-uncertainty principle.
  Then there exists an infinite family of good cyclic codes over the
  field $F$.
\end{thm}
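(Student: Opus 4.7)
The plan is to exploit the structure theorem of Proposition \ref{decom} for $R_p = \mathbb{F}_\ell[X]/(X^p-1)$, which tells us that the dimensions of its ideals fill out the arithmetic progression $\{i r_p : 0 \leq i \leq s_p\}$ (together with $\{i r_p + 1\}$) of step size $r_p = \ord_p(\ell)$. For each prime $p \in P$ provided by Definition \ref{def-wup-2}, I will select a single ideal $I_p$ whose dimension approximates a carefully chosen fraction of $p$, and then apply the weak uncertainty principle to control its minimum weight.

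First I will fix a real parameter $\alpha$ with $\epsilon < \alpha < \delta$, which is possible precisely because $\epsilon < \delta$ by hypothesis. For each $p \in P$, set $r_p = \ord_p(\ell)$, so that $r_p < \epsilon p$, and pick the largest integer $i_p$ with $i_p r_p \leq \alpha p$. Since $r_p < \epsilon p < \alpha p$ we have $i_p \geq 1$, and since $\alpha < 1$ we have $i_p \leq s_p = (p-1)/r_p$ for $p$ sufficiently large. By Proposition \ref{decom}(3), $R_p$ admits an ideal $I_p$ of dimension exactly $i_p r_p$ (take the direct sum of $i_p$ of the $\mathbb{F}_{\ell^{r_p}}$ factors), and by construction
$$(\alpha - \epsilon) p < \alpha p - r_p < i_p r_p \leq \alpha p.$$

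Next I will estimate the parameters of $C_p := I_p$ viewed as a cyclic code. The rate is $\dim(C_p)/p = i_p r_p / p > \alpha - \epsilon$. For the distance, take any non-zero $f \in I_p$; since $I_p$ is an ideal, $I_f \subseteq I_p$, so $\dim(I_f) \leq i_p r_p \leq \alpha p$. The $(\epsilon,\delta)$-uncertainty principle gives $\mu_{F,p}(f) > \delta p$, i.e., $\wt(f) + \dim(I_f) > \delta p$, whence
$$\wt(f) > \delta p - \alpha p = (\delta - \alpha) p.$$
Setting $c = \min(\alpha - \epsilon, \delta - \alpha) > 0$, the family $(C_p)_{p \in P}$ satisfies $\dim(C_p)/p > c$ and $d(C_p)/p > c$ for all sufficiently large $p \in P$, and is therefore a good family of cyclic codes over $F$.

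There is no genuine obstacle beyond balancing the two inequalities at once: one needs enough slack to place the ideal's dimension above $\epsilon p$, so that positive rate survives the $r_p$-granularity of the allowable dimensions, while staying below $\delta p$, so that the uncertainty gap furnishes positive distance. This slack is exactly the condition $\epsilon < \delta$ encoded in Definition \ref{def-wup-2}; without it the construction collapses. Thus the theorem reduces to combining the structure theorem with the $(\epsilon,\delta)$-uncertainty principle via a clean bookkeeping argument.
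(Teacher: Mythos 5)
Your proof is correct and follows essentially the same strategy as the paper: use Proposition \ref{decom}(3) and the bound $\ord_p(\ell)<\epsilon p$ to locate an ideal $I_p$ of controlled dimension, then apply the inclusion $I_h\subseteq I_p$ together with the $\delta$-uncertainty bound to get a lower bound on $\wt(h)$. The only cosmetic difference is the target dimension: the paper takes $\dim(I_p)\in[\epsilon p/2,\epsilon p)$, yielding rate $\geq\epsilon/2$ and distance $\geq(\delta-\epsilon)p$, whereas you introduce an intermediate $\alpha\in(\epsilon,\delta)$ and take $\dim(I_p)\approx\alpha p$, giving rate $>\alpha-\epsilon$ and distance $>(\delta-\alpha)p$; both choices yield a good family.
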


\begin{proof}
  For each prime $p \in P$, let $I_p \subset F[X]/(X^p-1)$ be a
  non-zero ideal such that
  $$
  \frac{\epsilon p}{2} \leq \dim( I_p) < \epsilon p.
$$
Such an element exists because $r=\ord_{p}(\ell) < \epsilon p$ by
definition, and $R = F[X]/(X^p-1)$ is a sum of ideals of dimension $r$
each, plus a one dimensional ideal, see Proposition~\ref{decom} (3).

For every element $h \in I_p$, we have $I_h \subset I_p $ and
hence $ \dim(I_h) \leq \dim( I_p)$.  From the weak uncertainty
inequality that we assume, we get
$$
\wt(h)=|\supp(h)| > \delta p - \dim(I_h) \geq \delta p - \dim(I_p) >
(\delta - \epsilon)p.
$$

The cyclic code $I_p$ has length $p$; the last computation shows that
its distance is $\geq (\delta-\epsilon)p$, and its dimension is
$\geq \epsilon p/2$. Hence by definition (see~(\ref{eq-good-code})),
the sequence $(I_p)_{p\in P}$ is an infinite sequence of good cyclic
codes over $F$.

\end{proof}

Generally speaking, condition (1) in Definition~\ref{def-wup-2}
ensures that we can find ideals with ``large'' distance, while
condition (2) is used to show the existence of such ideals with
``large'' dimension.

\begin{rem}
  Our proof shows that any choice of ideal $I_p$, such that
  $ \frac{\epsilon}{2} p \leq \dim(I_p) < \epsilon p$ will give a good
  code.  There are many possibilities for such ideals. This suggests
  that a randomized process might be used to prove existence of cyclic
  good codes even under a weaker uncertainty principle.
\end{rem}

\subsection{A uniform weak uncertainty principle does not hold}

It is only natural to ask (and maybe hope) that a uniform weak
uncertainty principle, uniform with respect to $\delta$, should hold
for all finite fields, or in other words, to ask whether there exists
$\delta > 0$ such that $\mu_{F,p} > \delta p$ for any finite field $F$
and any prime $p$.

We will show -- following an argument of Eli Ben-Sasson -- that,
assuming the existence of infinitely many Mersenne
primes, 
this is not the case. 



 

\begin{pro}[No uniform weak uncertainty principle] \label{pro-uup}
  Assume that there exist infinitely many Mersenne
  primes.  
  Then, for any $\delta>0$, there exists a finite field $F$ and a
  prime number $p$ such that $\mu_{F,p} \leq \delta p$.
\end{pro}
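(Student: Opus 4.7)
My plan is to leverage the Mersenne assumption to pick, for any fixed $\delta>0$, a large Mersenne prime $p=2^m-1$ and work over $F=\mathbb{F}_{2^m}$, the field of size exactly $p+1$, so that $\mu_p(F)=F^*$. A single trace polynomial as in \S\ref{s-4} only gives $\mu_{F,p}\lesssim p/2$; the new idea is to multiply together several ``independent'' traces, producing a polynomial whose zero set is a union of $\mathbb{F}_2$-hyperplanes of $F$ covering an arbitrarily large fraction of $F^*$, while the weight stays polylogarithmic in $p$.

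Given $\delta>0$, I fix an integer $t$ with $2^t\geq 4/\delta$, and then use the hypothesis to choose a Mersenne prime $p=2^m-1$ with $m>t$ and $m^t<\delta p/2$; both hold for $p$ large, since $m=\log_2(p+1)$. I pick $\alpha_1,\ldots,\alpha_t\in F$ that are $\mathbb{F}_2$-linearly independent, and set
\begin{equation*}
L_i(X)=\mathrm{tr}_{F/\mathbb{F}_2}(\alpha_iX)=\sum_{j=0}^{m-1}\alpha_i^{2^j}X^{2^j}\in F[X],\qquad f=L_1L_2\cdots L_t.
\end{equation*}
Expanding gives $f=\sum_{(j_1,\ldots,j_t)\in\{0,\ldots,m-1\}^t}\bigl(\prod_i\alpha_i^{2^{j_i}}\bigr)X^{\sum_i 2^{j_i}}$, so $\wt(f)\leq m^t$.

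For $\dim(I_f)$, I note that $f(\xi)=0$ iff $\xi\in\bigcup_i\ker L_i$, where each $\ker L_i=\{\beta\in F:\mathrm{tr}(\alpha_i\beta)=0\}$ is an $\mathbb{F}_2$-hyperplane. By non-degeneracy of the trace pairing $(\alpha,\beta)\mapsto\mathrm{tr}(\alpha\beta)$, the $\mathbb{F}_2$-independence of $\alpha_1,\ldots,\alpha_t$ makes the map $\Phi\colon F\to\mathbb{F}_2^t$, $\beta\mapsto(\mathrm{tr}(\alpha_i\beta))_i$, surjective, so $|\bigcap_i\ker L_i|=|\ker\Phi|=2^{m-t}$ and hence $|\bigcup_i\ker L_i|=2^m-2^{m-t}$. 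Removing $0$, which lies in every $\ker L_i$, yields $\Z(f)=p-2^{m-t}$, so $\dim(I_f)=2^{m-t}$. Since $t<m$, $\bigcup_i\ker L_i$ is a proper subset of $F$, so $f$ does not vanish identically on $F^*=\mu_p$, and $f\neq 0$ in $R$. Combining,
\begin{equation*}
\mu_{F,p}\leq\mu_{F,p}(f)\leq m^t+2^{m-t}\leq \tfrac{\delta p}{2}+\tfrac{\delta p}{2}=\delta p,
\end{equation*}
as required.

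No step is really hard: the two technical ingredients are a naive expansion bound for the weight of a product of linearized polynomials and a straightforward complement-count for the union of kernels, using only linear independence and non-degeneracy of the trace pairing. The main ``obstacle'' is conceptual rather than technical: one must recognize that the Mersenne hypothesis is exactly what allows us to take $F=\mathbb{F}_{2^m}$ with $|F^*|=p$, so that $\mu_p$ coincides with $F^*$ and the $\mathbb{F}_2$-additive structure of $F$ can be exploited simultaneously through many trace maps. Once the construction is in hand, every estimate is elementary and finite-dimensional linear algebra over $\mathbb{F}_2$.
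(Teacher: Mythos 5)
Your construction is correct in substance and takes a genuinely different route from the paper's. The paper works with $k$ \emph{pairwise disjoint} $\mathbb{F}_2$-affine flats $A_1,\dots,A_k$ in $F=\mathbb{F}_{2^n}$, none containing $0$, of dimensions $n-1,\dots,n-k$; it uses Ore's lemma on additive polynomials to bound $\wt(f_{A_i})$ and then multiplies, so that $\deg(f)=|A_1\cup\cdots\cup A_k|=2^n(1-2^{-k})<p$, giving directly $\dim(I_f)=p-\deg(f)=2^{n-k}-1$ and $\wt(f)\leq(n+1)^k$. You instead take $t$ \emph{overlapping} $\mathbb{F}_2$-hyperplanes through the origin, namely the kernels of the trace forms $\beta\mapsto\mathrm{tr}(\alpha_i\beta)$, and count the size of the union via surjectivity of $\Phi=(\mathrm{tr}(\alpha_i\cdot))_i$, i.e.\ via the complement $\Phi^{-1}(1,\dots,1)$. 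This is cleaner in one respect: you never need Ore's lemma, because the trace polynomial is an explicit linearized polynomial, and your weight bound $m^t$ comes from a direct monomial count. The two approaches give the same order of magnitude.

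There is one technical point you should patch. For $t\geq2$ your polynomial $f=L_1\cdots L_t$ has degree $t\cdot 2^{m-1}\geq 2^m>p$, so it is \emph{not} a legal representative of an element of $R=F[X]/(X^p-1)$ in the sense of the definition of $\mu_{F,p}(f)$ (which requires $\deg<p$). You must replace $f$ by its reduction $\bar f$ modulo $X^p-1$. This is harmless but needs to be said: the ideal $I_{\bar f}=I_f$ is unchanged, so $\dim(I_{\bar f})=p-\Z(f)=2^{m-t}$ still holds; and since reducing exponents modulo $p$ only merges monomials (so $\supp(\bar f)$ is contained in the image of the set of exponents $\{\sum_i 2^{j_i}\}$ under reduction mod $p$), one still has $\wt(\bar f)\leq m^t$. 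Also $\bar f\neq0$, since $f$ is nonzero on the $2^{m-t}$ elements of $\Phi^{-1}(1,\dots,1)\subset F^*$. With these lines added, the argument is complete. By contrast, the paper sidesteps this entirely because its $f$ has degree $|\bigcup A_i|<p$ by construction.

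Two smaller stylistic remarks: you should note explicitly that $\alpha_1,\dots,\alpha_t$ exist because $t<m=\dim_{\mathbb{F}_2}F$ (which your hypothesis $m>t$ guarantees); and the equality $\Z(f)=\deg\gcd(f,X^p-1)$ (rather than counting roots with multiplicity) uses that $X^p-1$ is squarefree, which is worth a half-sentence since your $f$ is very far from squarefree.
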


For the proof, we will use the following result of Ore \cite{O}:

\begin{lem}[Ore] \label{ore} Let $q$ be a prime number and $n\geq
  1$. Let $F=\mathbb{F}_{q^n}$, and view $F$ as an
  $\mathbb{F}_q$-vector space of dimension $n$. For every integer
  $k\leq n$ and every $\mathbb{F}_q$-affine subspace $A\subset F$ of
  dimension $k$, the polynomial
$$
f_A = \prod_{a \in A}(X-a)
$$
satisfies
$$
f_A= \alpha + \sum_{i=0}^k \alpha_i X^{q^i}
$$ 
where $\alpha$ and $\alpha_i$ are elements of $F$. In particular, we
have $\wt(f_A)\leq k+2$.
\end{lem}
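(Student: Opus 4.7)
The plan is to first establish the statement for the case when $A=V$ is an $\mathbb{F}_q$-linear subspace, showing that then $f_V$ is a so-called $q$-polynomial (or linearized polynomial), i.e., of the form $\sum_{i=0}^k \alpha_i X^{q^i}$ with no constant term. The general affine case will then follow from a simple translation argument.

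For the linear case, I would proceed by induction on $k=\dim_{\mathbb{F}_q}(V)$. The base case $k=0$ is trivial since $f_{\{0\}}(X)=X$. For the inductive step, choose $V'\subset V$ of codimension $1$ and $w\in V\setminus V'$, so that $V=\bigsqcup_{\lambda\in\mathbb{F}_q}(V'+\lambda w)$ and consequently
$$
f_V(X)=\prod_{\lambda\in\mathbb{F}_q} f_{V'}(X-\lambda w).
$$
Writing $g=f_{V'}=\sum_{i=0}^{k-1}\beta_i X^{q^i}$ by induction, the identity $(X-Y)^{q^i}=X^{q^i}-Y^{q^i}$ in characteristic $q$ combined with $\lambda^{q^i}=\lambda$ for $\lambda\in\mathbb{F}_q$ gives $g(X-\lambda w)=g(X)-\lambda g(w)$. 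Setting $c=g(w)$, which is non-zero since $w\notin V'$, and using $\prod_{\lambda\in\mathbb{F}_q}(Y-\lambda)=Y^q-Y$, I obtain
$$
f_V(X)=\prod_{\lambda\in\mathbb{F}_q}(g(X)-\lambda c)=g(X)^q-c^{q-1}g(X).
$$
Since $g(X)^q=\sum_{i=0}^{k-1}\beta_i^q X^{q^{i+1}}$ (again by the freshman's dream), the right hand side is a $q$-polynomial of degree $q^k$, completing the induction.

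For the affine case, write $A=a+V$ where $V$ is the linear subspace $A-A$ and $a\in F$. Then
$$
f_A(X)=\prod_{v\in V}(X-a-v)=f_V(X-a).
$$
Since $f_V$ is a $q$-polynomial, the same freshman's dream identity yields $f_V(X-a)=f_V(X)-f_V(a)$, hence $f_A=\alpha+\sum_{i=0}^k\alpha_i X^{q^i}$ with $\alpha=-f_V(a)$ and the $\alpha_i$ inherited from $f_V$. This representation involves at most $k+2$ monomials (the constant term together with the $k+1$ terms indexed by $i=0,\dots,k$), giving $\wt(f_A)\leq k+2$.

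The only delicate point is the inductive identity for $q$-polynomials, which hinges crucially on using $\lambda\in\mathbb{F}_q$ (so that $\lambda^{q^i}=\lambda$) and on working in characteristic $q$; once this is in place the remainder of the argument is a direct manipulation. No other step poses any real obstacle.
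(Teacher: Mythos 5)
Your proof is correct. Both you and the paper begin with the same reduction to the linear case (the paper leaves "it is easy to see" unexpanded; you supply the translation argument $f_A(X)=f_V(X-a)=f_V(X)-f_V(a)$, which requires already knowing $f_V$ is a $q$-polynomial). Where you diverge is in how the linear case is handled: the paper simply observes that $f_V$ is a separable polynomial whose root set is an additive group and then cites Goss's book (Th.~1.2.1 and Prop.~1.1.5) for the facts that such a polynomial is additive and that additive polynomials over a field of characteristic $q$ are exactly the $q$-polynomials. You instead prove this from scratch by induction on $\dim V$, using the coset decomposition $V=\bigsqcup_{\lambda\in\mathbb{F}_q}(V'+\lambda w)$, the identity $g(X-\lambda w)=g(X)-\lambda g(w)$ for $q$-polynomials $g$ and $\lambda\in\mathbb{F}_q$, and the factorization $\prod_{\lambda\in\mathbb{F}_q}(Y-\lambda)=Y^q-Y$ to get the recursion $f_V=g^q-c^{q-1}g$ with $c=g(w)\neq 0$. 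This buys you a short, self-contained argument at the cost of a few lines of computation, whereas the paper's citation-based route is terser but outsources the key structural fact to the literature. Both are sound; yours is arguably preferable for a reader who wants to see why the lemma is true without pulling in the theory of additive polynomials as a black box.
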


\begin{proof}
  It is easy to see that it suffices to consider the case where $A$ is
  a vector subspace of dimension $k$. Then $f_A$ is a separable
  polynomial whose roots form an additive subgroup of $F$. This
  implies that $f_A$ is an \emph{additive polynomial}
  (see~\cite[Th. 1.2.1]{goss}), which is necessarily of the desired
  form (with $\alpha=0$ in that case) by~\cite[Prop. 1.1.5]{goss}.
\end{proof}

\begin{rem}
  In general, if $K$ is any field, an \emph{additive polynomial}
  $f\in K[X]$ is a polynomial such that $f(x+y)=f(x)+f(y)$ for any $x$
  and $y$ in $K$. If $K$ has characteristic zero, it is easy to check
  that $f$ is necessarily of the form $f=aX$ for some $a\in K$, but
  this is not so in characteristic $p>0$, since any monomial $X^{p^i}$
  is then an additive polynomial. The result we used is that any
  additive polynomial is a linear combination of these monomials.
\end{rem}

\begin{proof}[Proof of Proposition \ref{pro-uup}]
  Let $q=2$ and let $p=2^n-1$ be a Mersenne prime, so that
  $n = \ord_p(2)$. Let $F=\mathbb{F}_{2^n}$. Then the non-zero
  elements of $F$ are precisely the $p$-th roots of unity.

  We view $F$ as an $n$-dimensional vector space over $\mathbb{F}_2$,
  and fix a basis $e_1$, \ldots, $e_n$. Let $k$ be an integer
  parameter such that $1\leq k<n$.
\par
There exist disjoint affine subspaces $A_1$, \ldots, $A_k$ in $F$,
none of which contains $0$, with $\dim(A_i)=n-i$ (for instance, we
could take $A_i$ to be the subspace defined by  the equations
$$
A_i=\{x\in F\,\mid\, x_1=\cdots=x_{i-1}=0,\quad x_i=1\},
$$
where $(x_1,\ldots,x_n)$ are the coordinates of an element $x$ of $F$
with respect to the chosen basis $(e_1,\ldots,e_n)$).
\par
The disjoint union of the subspaces $A_i$ has cardinality
$$
\Bigl|\bigcup_{1\leq i\leq
  k}A_i\Bigr|=\sum_{i=1}^k2^{n-i}=2^n\Bigl(1-\frac{1}{2^{k}}\Bigr).
$$
Thus if we denote by $f_i$ the polynomial associated to $A_i$ as in
Lemma~\ref{ore}, and put
$$
f=\prod_{i=1}^k f_i\in \mathbb{F}[X],
$$
then we have
$$
\deg(f)=\sum_{i=1}^k\deg(f_i)=\Bigl|\bigcup_{1\leq i\leq
  k}A_i\Bigr|=2^n\Bigl(1-\frac{1}{2^{k}}\Bigr)<2^n-1=p
$$
since $1\leq k<n$ and
$$
\wt(f) \leq \prod_{i=1}^k \wt(f_i) \leq \prod_{i=1}^k (n-i+2) \leq
(n+1)^k.
$$
Since $\gcd(f,X^p-1)=f$, we have
$$
\dim(I_f)=p-\deg(\gcd(f,X^p-1))=p-\deg(f)=2^{n-k}-1\leq \frac{p}{2^k}.
$$
\par
Let $\delta>0$ be any given real number.  Take some integer $k\geq 1$
such that $\frac{1}{2^k}\leq \frac{\delta}{2}$. By the assumption that
there exist infinitely many Mersenne primes, we can find a prime
$p=2^n-1$ for which $n>k$ and
$$
(n+1)^k \leq \frac{\delta}{2} p.
$$
Then using the polynomial $f$ obtained as above for these parameters
$p=2^n-1$ and $k$, we get
$$
\mu_{F,p}\leq \wt(f) + \dim(I_f) \leq (n+1)^k + \frac{p}{2^k}\leq
\frac{\delta}{2} p + \frac{\delta}{2}p = \delta p,
$$
and therefore $\mu_{F,p}\leq \delta p$.
\end{proof}

It is important to notice that this counter-example does \emph{not}
show that $\mathbb{F}_2$ does not satisfy the $\delta$-uncertainty
principle for the prime $p$, since the polynomials $f_i$ and $f$ do
not usually belong to $\mathbb{F}_2[X]$.  Furthermore, as the
underlying field depends on the primes $p$, this counter example is
not really relevant to our search of families of cyclic good codes,
since in such a family we need to work with a fixed underlying field
while in the last example, the size of $F$ grows to infinity.

\section{Why good cyclic codes should exist}\label{s-6}

\subsection{Preliminaries}

In this section, we describe some heuristic arguments that all point
in the direction of the existence of families of good cyclic codes,
and of the weak uncertainty principle according to
Definition~\ref{def-wup-2}.

In both arguments, the main unproved claim is that for a polynomial of
degree $<p$, the property of being ``sparse'' (i.e., of having small
weight $\wt(f)$) and of vanishing on many roots of unity should be
roughly independent. The following result is then relevant.

\begin{lem}\label{lm-entropy}
  Let $\delta$ be a fixed real number with $0<\delta<1/2$. Let
  $S_{\delta}$ be the set of polynomials $f$ in
  $\mathbb{F}_2[X]/(X^p-1)$ with $\wt(f)\leq \delta p$. Then we have
$$
|S_{\delta}|=2^{pH'(\delta)+o(p)}
$$
where $H'(\delta)=H(\delta)/\log(2)$ and
$$
H(\delta)=-\delta\log(\delta)-(1-\delta)\log(1-\delta)
$$
is the entropy for Bernoulli random variables.
\end{lem}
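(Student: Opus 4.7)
The plan is entirely combinatorial: the ring $\mathbb{F}_2[X]/(X^p-1)$ has, as a vector space, the natural basis $\{1,X,\ldots,X^{p-1}\}$, and every element is uniquely represented by a polynomial of degree $<p$. Hence, specifying $f\in \mathbb{F}_2[X]/(X^p-1)$ is the same as specifying its support in $\{0,1,\ldots,p-1\}$, and the condition $\wt(f)\leq \delta p$ translates into the support having cardinality at most $\lfloor\delta p\rfloor$. Consequently
\[
|S_\delta|=\sum_{k=0}^{\lfloor \delta p\rfloor}\binom{p}{k}.
\]

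The next step is to apply the standard entropy estimate for partial binomial sums. Since $\delta<1/2$, the terms $\binom{p}{k}$ for $0\leq k\leq \lfloor \delta p\rfloor$ are increasing in $k$, so the maximum is attained at $k=\lfloor\delta p\rfloor$. This gives the two-sided bound
\[
\binom{p}{\lfloor\delta p\rfloor}\leq |S_\delta|\leq (\lfloor\delta p\rfloor+1)\binom{p}{\lfloor\delta p\rfloor}.
\]
Stirling's formula applied to the central binomial coefficient yields
\[
\binom{p}{\lfloor\delta p\rfloor}=2^{pH'(\delta)+O(\log p)},
\]
where $H'(\delta)=H(\delta)/\log 2$ is the binary entropy, since by Stirling
\[
\log\binom{p}{\lfloor\delta p\rfloor}=-p\bigl(\delta\log\delta+(1-\delta)\log(1-\delta)\bigr)+O(\log p)=pH(\delta)+O(\log p).
\]

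Combining these two displays, the polynomial factor $\lfloor\delta p\rfloor+1$ and the $O(\log p)$ error in Stirling are both absorbed into the $o(p)$ in the exponent, yielding $|S_\delta|=2^{pH'(\delta)+o(p)}$ as claimed. There is no real obstacle; the only thing requiring any care is the use of the hypothesis $\delta<1/2$, which ensures that the binomial sum is dominated by its last term (up to a polynomial factor) rather than by some interior value.
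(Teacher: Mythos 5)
Your proof is correct and follows essentially the same route as the paper's: both sandwich $|S_\delta|$ between the single largest binomial coefficient and a polynomial multiple of it, then apply Stirling's formula and absorb the polynomial factors into the $o(p)$ term. The only cosmetic difference is that the paper bounds the number of summands by $p$ rather than $\lfloor\delta p\rfloor+1$, which is immaterial.
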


\begin{proof}[Sketch of proof]
We have
$$
\binom{p}{\lfloor \delta p\rfloor}\leq |S_{\delta}|\leq
\sum_{j=1}^{\delta p}\binom{p}{j}\leq p\binom{p}{\lfloor \delta
  p\rfloor}
$$
which the Stirling formula reveals to be of size
$$
e^{H(\delta)p+o(p)}=2^{pH(\delta)/\log(2)+o(p)},
$$
as claimed.
\end{proof}

We also recall some fairly classical results on primes where $2$ has
relatively small multiplicative order.

\begin{lem}\label{lm-split}
\emph{(1)} For any $\epsilon$ with $0<\epsilon<1$, there exist infinitely
many primes $p$ such that $\ord_p(2)<\epsilon \cdot p$.
\par
\emph{(2)} Assume the Generalized Riemann Hypothesis for Dedekind zeta
functions of number fields. For any $\epsilon>0$, there exist
infinitely many primes $p$ such that $\ord_p(2)<p^{3/4+\epsilon}$.
\end{lem}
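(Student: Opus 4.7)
Both parts will follow from Lemma~\ref{lm-chebo} (with $\ell=2$) once one produces primes $p$ that split completely in $K_{q,2}=\mathbb{Q}(e^{2i\pi/q},\sqrt[q]{2})$ for suitable $q$: indeed, such $p$ satisfy $\ord_p(2)\leq (p-1)/q$.

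For part (1), I would fix any odd prime $q$ with $q>1/\epsilon$ and invoke the qualitative Chebotarev density theorem, which produces infinitely many primes $p$ whose Frobenius in $\mathrm{Gal}(K_{q,2}/\mathbb{Q})$ is trivial, equivalently $p$ totally split in $K_{q,2}$. For each such $p$, Lemma~\ref{lm-chebo} gives $\ord_p(2)\leq (p-1)/q<\epsilon p$.

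For part (2), I would allow $q$ to grow with $p$ and apply the effective form of Chebotarev's theorem due to Lagarias--Odlyzko, valid under GRH: for a Galois extension $K/\mathbb{Q}$ and the identity conjugacy class,
\begin{equation*}
\pi_{\{1\}}(x)=\frac{\mathrm{Li}(x)}{[K:\mathbb{Q}]}+O\!\left(\frac{x^{1/2}}{[K:\mathbb{Q}]}\bigl(\log|\mathrm{disc}(K)|+[K:\mathbb{Q}]\log x\bigr)\right).
\end{equation*}
I would take $K=K_{q,2}$ for a varying odd prime $q$. For all but finitely many such $q$, $2$ is not a $q$-th power in $\mathbb{Q}(\zeta_q)$, so the Kummer extension $K_{q,2}/\mathbb{Q}(\zeta_q)$ has degree $q$ and $[K_{q,2}:\mathbb{Q}]=q(q-1)$. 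Since only the rational primes $2$ and $q$ can ramify in $K_{q,2}$, a routine conductor-discriminant computation yields $\log|\mathrm{disc}(K_{q,2})|=O(q^2\log q)$. Choosing $q$ to be a prime in an interval of the form $[x^{1/4}/\log^2 x,\,2x^{1/4}/\log^2 x]$ (possible for large $x$ by Bertrand's postulate) makes the main term dominate the error, producing at least one prime $p\in(x/2,x]$ totally split in $K_{q,2}$.

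For such a prime $p$, Lemma~\ref{lm-chebo} gives $\ord_p(2)\leq (p-1)/q\ll p^{3/4}(\log p)^2$, which is $<p^{3/4+\epsilon}$ for $p$ sufficiently large; letting $x\to\infty$ produces infinitely many such $p$. The main technical obstacle is the effective bound on $\log|\mathrm{disc}(K_{q,2})|$ polynomial in $q$; once this is in hand, the remaining steps are entirely routine applications of Lemma~\ref{lm-chebo} and the two forms of Chebotarev's theorem.
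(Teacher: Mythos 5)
Your proposal is correct and follows essentially the same path as the paper: part (1) via Lemma~\ref{lm-chebo} (with $\ell=2$) and the qualitative Chebotarev density theorem for a fixed odd prime $q>1/\epsilon$; part (2) via the GRH-effective Chebotarev theorem of Lagarias--Odlyzko in Serre's formulation applied to $K_{q,2}$, together with a polynomial-in-$q$ bound on $\log|\mathrm{disc}(K_{q,2})|$ (the paper gets this from Serre's inequality (20), a Hensel-type bound, rather than a conductor-discriminant computation, but that is a matter of taste). The only cosmetic difference is bookkeeping: the paper fixes the auxiliary prime $\ell$ and bounds the least totally-split $p$ by $\ell^{4+\epsilon}$, whereas you fix a dyadic window $(x/2,x]$ for $p$ and take $q\asymp x^{1/4}/\log^2 x$; both yield $\ord_p(2)<p/q<p^{3/4+\epsilon}$ for infinitely many $p$.
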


\begin{proof}
  In both cases, we use the criterion of Lemma~\ref{lm-chebo}: if
  $\ell$ is an odd prime and if $p$ is an odd prime distinct from
  $\ell$ such that $p$ is totally split in the field
  $K_{\ell}=\mathbb{Q}(e^{2i\pi/\ell},\sqrt[\ell]{2})$, then
  $p\equiv 1 \pmod\ell$ and the order of $2$ modulo $p$ divides
  $(p-1)/\ell$, hence is $<p/\ell$.
\par
Hence, taking $\ell$ to be any prime such that $\ell>1/\epsilon$, the
first statement follows from the existence of infinitely many primes
totally split in $K_{\ell}$ (this is an easy consequence of the
Chebotarev Density Theorem, see for
instance~\cite[Th. 13.4]{neukirch}).
\par
For the second, we use the explicit form of the Chebotarev Density
Theorem, following Serre's presentation of the results of Lagarias and
Odlyzko: for any odd prime $\ell$ and any $X\geq 2$, the number
$\pi_{\ell}(X)$ of primes $\leq X$ which are totally split in
$K_{\ell}$ satisfies
$$
\pi_{\ell}(X)=\frac{1}{[K_{\ell}:\mathbb{Q}]} \int_2^X\frac{dt}{\log
  t}+ O(\sqrt{X}\log(\ell X))
$$
where the implied constant is absolute, under the assumption that
Dedekind zeta functions satisfy the Riemann Hypothesis. Precisely,
this follows from~\cite[Th. 4]{serre}, applied with $E=K_{\ell}$,
$K=\mathbb{Q}$ and $C$ the trivial conjugacy class of the identity
element; then $n_E=[K_{\ell}:\mathbb{Q}]$ and the discriminant $d_E$
is estimated using the bound~\cite[(20)]{serre}.
\par
In particular, since the integral is of size $X/(\log X)$ and
$[K_{\ell}:\mathbb{Q}]\leq \ell^2$, this result shows that if
$\epsilon>0$ is fixed and $\ell$ is any prime large enough, there
exists a prime $p$ totally split in $K_{\ell}$ with
$p\leq \ell^{4+\epsilon}$. Such a prime $p$ satisfies
$$
\ord_p(2)<\frac{p}{\ell}<p^{1-1/(4+\epsilon)},
$$
and the result follows.
\end{proof}

The interest of these statements is that if the order $r$ of $2$
modulo $p$ is ``small'' compared with $p$, then by the discussion
following Proposition~\ref{decom}, the ring
$R=\mathbb{F}_2[X]/(X^p-1)$ contains many ideals. In particular, if
$r=p^{3/4+\epsilon}$ and $\eta$ with $0<\eta<1$ is fixed, and if we
look for ideals of dimension $ir\approx \eta p$, then for such primes
we have approximately $\binom{s}{i}$ ideals of dimension $\eta p$,
where (see Proposition~\ref{decom}), we have $s=(p-1)/r$ and
$i=\eta p/r\sim \eta s$. By Stirling's formula, as in the
Lemma~\ref{lm-entropy}, this numbers grows exponentially with $s$.


\subsection{Picking ideals at random}

Fix some real number with $0<\eta<1$. Let $p$ be a prime such that
there exists an ideal $I$ in $R=\mathbb{F}_2[X]/(X^p-1)$ with
$\dim(I)\sim \eta p$. 

Let $\delta>0$ be another parameter. Assuming that the probability for
an element of $I_p$ to be in the set $S_{\delta}$ of
Lemma~\ref{lm-entropy} is approximately the same as the probability
for a general element of $R$, the expected cardinality of the
intersection $S_{\delta}\cap I$ should be about
$$
2^{pH'(\delta)+\dim(I)-p+o(1)}=2^{p(H'(\delta)-(1-\eta))+o(1)}
$$
by Lemma~\ref{lm-entropy}. If $\eta$ and $\delta$ are chosen so that
$$
1-\eta>H'(\delta),
$$
this expectation is $<1$. So, as in the Borel-Cantelli lemma, if we
select an ideal $I_p$ of this approximate dimension for all primes
where this is possible (an infinite set, by Lemma~\ref{lm-split} and
Proposition~\ref{decom}), we may expect that only finitely many $p$
will have the property that $I_p$ intersects $S_{\delta}$. Since
$H'(\delta)\to 0$ as $\delta\to 0$, a suitable choice of $\delta$
exists for any fixed $\eta$. 
\par
Moreover, under the Generalized Riemann Hypothesis, picking the primes
$p$ as given by Lemma~\ref{lm-split} (2), the number of options for
$I_p$ grows exponentially as a function of
$s=p/\ord_p(2)\approx p^{1/4-\epsilon}$, and we need to succeed only
with a single one of them to obtain a good cyclic code with rate
$\eta$.

\subsection{The weak uncertainty principle should hold}

Here we give a heuristic argument, suggested by B. Poonen, as to why
the weak uncertainty principle of Definition~\ref{def-wup-2} should
hold for the field $\mathbb{F}_2$ for an infinite sequence of
primes. This is a variant of the previous argument.


First, the Generalized Riemann Hypothesis implies that there are
infinitely many primes such that $\ord_p(2) = \frac{p-1}{2}$ (this is
a simple variant of the argument of Hooley~\cite{hooley} for primitive
roots, where we count primes that are split in the quadratic field
$\mathbb{Q}(\sqrt{2})$, and not split in any field
$\mathbb{Q}(e^{2i\pi/\ell},\sqrt[\ell]{2})$ for $\ell\geq 3$ prime,
see Lemma~\ref{lm-chebo} and \cite{moree}).

We consider such primes and explain that all but finitely many should
satisfy Definition~\ref{def-wup-2} with $\epsilon=1/2$ and
$\delta=3/5$. Indeed, the condition $\ord_p(2)<\epsilon p$ holds by
construction.  Suppose $\mu_{F,p}\leq \delta p$. Then there exists a
non-zero $f\in \mathbb{F}_2[X]$ of degree $<p$ such that
\begin{equation}\label{eq-bad}
\mu_{F,p}(f)=\wt(f)+\dim I_f=\wt(f)+p-\deg(\gcd(f,X^p-1))\leq \delta
p.
\end{equation}
Since $\ord_p(2)=(p-1)/2$, the polynomial $(X^p-1)/(X-1)$ has exactly
two irreducible factors of degree $(p-1)/2$. So the gcd of $f$ and
$X^p-1$ is of degree $1$, $(p-1)/2$ or $p-1$. In the first case, the
inequality~(\ref{eq-bad}) is clearly false. In the third case, we have
$f=(X^p-1)/(X-1)$, with $\wt(f)=p$, and again~(\ref{eq-bad}) is false.
So $f$ must be divisible by exactly one of the two factors of degree
$(p-1)/2$, say $f_1$, and then we must have $\wt(f)\leq p/10+1/2$
for~(\ref{eq-bad}) to hold.
\par
Now comes the heuristic argument, where we will assume that the
property of being divisible by $f_1$ and of having support of size
$\leq p/10$ are ``independent'': the number of polynomials $f$ of
degree $<p$ divisible by $f_1$ is about $2^{p/2}$, and on the other
hand, the number of polynomials $f$ of degree $<p$ with $\wt(f)<p/10$
is $2^{pH'(1/10)+o(p)}$ by Lemma~\ref{lm-entropy}.
Since 
$$
H'(1/10)=\frac{H(1/10)}{\log(2)} \simeq 0.47<1/2,
$$
we may hope that the expected number of polynomials in the
intersection is
$$
O(2^{(0.47-1/2)p})=O(2^{-3p/100})
$$
and since the sum of the series $\sum 2^{-3p/100}$ is finite, this
suggests (by analogy with the Borel-Cantelli lemma) that the set of
primes where the intersection is non-empty is finite.

 


F. Voloch has pointed out that one must be careful with this
heuristic. Indeed, let $C_p$, for $p$ odd, be the \emph{quadratic
  residue} code of dimension $(p-1)/2$, namely the cyclic code
corresponding to the principal ideal generated by the polynomial
$$
\prod_{a\in(\mathbb{F}_p^{\times})^2}(X-a)\in \mathbb[X].
$$
If the last step is taken literally, the previous argument suggests
that the family of the cyclic codes $C_p$, parameterized by primes $p$
such that $\mathrm{ord}_p(2)=(p-1)/2$, is good. However, assuming GRH,
Voloch's results~\cite{V} imply that this is not the case. 
\par
More precisely, Voloch shows, under the Generalized Riemann
Hypothesis, that there exist an infinite sequence of primes $p$ for
which the distance of the code $C_p$ is $\ll p(\log p)^{-1}$ (he
obtains an unconditonal bound of size $\ll p(\log\log p)^{-1}$).
Although the primes that he constructs in~\cite{V} do not necessarily
satisfy the condition $\mathrm{ord}_p(2)=(p-1)/2$ that we wish to
impose, we will now show that the two can be combined (as was
suggested to us by Voloch).
\par
Indeed, Voloch defines a sequence of Galois extensions
$L_{\ell}/\mathbb{Q}$ of degree about $(\ell-1)2^{\ell}$, for $\ell$ a
prime. He shows that if $p$ is totally split in $L_{\ell}$, then the
distance of $C_p$ is $\leq (p-1)/(2\ell)$ (for this purpose, he uses a
formula of Helleseth). It turns out that the splitting restrictions in
$L_{\ell}$ are compatible with those involved in constructing primes
with $\mathrm{ord}_p(2)=(p-1)/2$.  Under the Generalized Riemann
Hypothesis, one gets by following Hooley's method (see, e.g.,~\cite[\S
5]{moree}) that for a given odd prime $\ell$ and for $X\geq 2$, there
are roughly
$$
\frac{1}{[L_{\ell}:\mathbb{Q}]} \frac{X}{\log X}
+O\Bigl(\frac{X(\log\log X)}{(\log X)^2}\Bigr)
$$
primes $p\leq X$ satisfying all the desired combined splitting
conditions. Since the degree of $L_{\ell}$ over $\mathbb{Q}$ is about
$\ell 2^{\ell}$, we can find a prime $p$ of size about
$\exp(\exp(\ell))$ that satisfies the desired conditions. This
provides an infinite family of codes $C_p$ with distance
$\ll p/(\log\log p)$, under the Generalized Riemann Hypothesis.

Although this discussion shows that the heuristic argument cannot be
literally correct, the optimist might still hope that the events which
we consider are sufficiently independent to still lead to infinitely
many primes where the weak uncertainty principle holds. It is maybe a
positive sign that the primes given by Voloch's argument are rather
sparse, and even then, only a very slow decay of their distance is
proved.

\section*{Appendix}

\subsection*{Chebotarev's Theorem}

A well-known (but not the best-known!) result of Chebotarev~\cite{C}
states the following:

\begin{thm}[Chebotarev] \label{chebo} Let $p$ be a prime and
  $\xi = e^{\frac{2\pi i}{p}} \in \mathbb{C}$. Let $V$ be the
  Vandermonde matrix
  $V = (\xi^{ij})_{i,j=0}^{p-1} \in M_p(\mathbb{C})$. Then each minor
  of the matrix $V$ is invertible, i.e., we have
  $\det(V|_{A\times B}) \ne 0$ for any $A,B \subset \{0,\ldots,p-1\}$,
  $|A|=|B|$, where $V|_{A\times B}$ denotes the minor of $V$ with rows
  in $A$ and columns in $B$.
\end{thm}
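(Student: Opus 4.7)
The plan is to prove $\Delta := \det(V|_{A \times B}) = \det(\xi^{ab})_{a\in A,\, b\in B}$ is nonzero by expanding it in powers of the uniformizer $\xi - 1$ of the unique prime $\mathfrak{p}$ above $p$ in $\mathbb{Z}[\xi]$ (recall that $(\xi-1)^{p-1}$ equals $p$ up to a unit). If I can isolate the leading term explicitly and show it is a $p$-adic unit multiple of a power of $\xi - 1$, then $v_{\mathfrak{p}}(\Delta) < \infty$, forcing $\Delta \neq 0$.

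Fix enumerations $A = \{a_1, \ldots, a_k\}$ and $B = \{b_1, \ldots, b_k\}$. Applying the binomial identity $X^{ab} = \sum_m \binom{ab}{m}(X-1)^m$ entrywise and expanding the determinant by column multilinearity, I get
\begin{equation*}
  \Delta = \sum_{t \geq 0} C_t (\xi-1)^t, \qquad C_t = \sum_{\substack{m_1, \ldots, m_k \geq 0 \\ m_1 + \cdots + m_k = t}} \det\!\left(\binom{a_i b_j}{m_j}\right)_{\!i,j} \in \mathbb{Z}.
\end{equation*}
Since $\binom{a_i b_j}{m_j}$ is a polynomial of degree $m_j$ in $a_i b_j$, the $j$-th column of the inner matrix lies in the $\mathbb{Q}$-span of $(a_i^d)_{i}$ for $d \leq m_j$. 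A Hall-type argument shows the inner determinant vanishes unless the sorted values satisfy $m_{(j)} \geq j-1$ for all $j$: if not, some collection of $j$ columns lies in a space of dimension $\leq j-1$. This forces $\sum_j m_j \geq \binom{k}{2}$, with equality iff $(m_1, \ldots, m_k)$ is a permutation of $(0, 1, \ldots, k-1)$. In particular $C_t = 0$ for all $t < \binom{k}{2}$.

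To compute $C_{\binom{k}{2}}$ I retain only the leading term $y^m/m!$ of each $\binom{y}{m}$ and sum over $\sigma \in S_k$ with $m_j = \sigma(j) - 1$. Using $\det((a_ib_j)^{\sigma(j)-1})_{i,j} = \mathrm{sgn}(\sigma)\, V_A \prod_j b_j^{\sigma(j)-1}$ and recognising $\sum_{\sigma} \mathrm{sgn}(\sigma) \prod_j b_j^{\sigma(j)-1}$ as the transposed Vandermonde $V_B$, I obtain the clean identity
\begin{equation*}
  C_{\binom{k}{2}} = \frac{V_A \, V_B}{0! \, 1! \cdots (k-1)!}, \qquad V_A = \prod_{1 \leq i < j \leq k}(a_j - a_i),\quad V_B = \prod_{1 \leq i < j \leq k}(b_j - b_i).
\end{equation*}
A slicker route to the same formula is to substitute $X = e^u$ and extract the coefficient of $u^{\binom{k}{2}}$ from $\det(e^{a_i b_j u})$ by multilinearity.

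Finally, because $A, B \subseteq \{0, \ldots, p-1\}$, every factor of $V_A V_B$ is a nonzero integer of absolute value at most $p-1$, so $p \nmid V_A V_B$; and since $k \leq p$, every factorial $j!$ with $j \leq k-1$ is coprime to $p$. Hence $C_{\binom{k}{2}}$ is a nonzero integer with $v_p(C_{\binom{k}{2}}) = 0$. Factoring out the leading power yields $\Delta = (\xi-1)^{\binom{k}{2}} \bigl(C_{\binom{k}{2}} + (\xi-1)\cdot(\cdots)\bigr)$, and the second factor reduces to $C_{\binom{k}{2}} \not\equiv 0 \pmod p$ under $\mathbb{Z}[\xi]/(\xi-1) \cong \mathbb{F}_p$, hence is a $\mathfrak{p}$-adic unit. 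Therefore $v_{\mathfrak{p}}(\Delta) = \binom{k}{2}$ is finite, so $\Delta \neq 0$. I expect the main bookkeeping step to be the clean identification of $C_{\binom{k}{2}}$ as a product of two Vandermonde determinants; everything else is formal manipulation plus elementary $p$-adic divisibility.
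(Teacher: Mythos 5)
Your proof is correct, and it takes a genuinely different route from the paper's. The paper does not prove Chebotarev's theorem directly: the appendix proves only that it is \emph{equivalent} to the uncertainty principle for $\mathbb{Z}/p\mathbb{Z}$ over $\mathbb{C}$, and that uncertainty principle is established in \S 4 by a detour --- first in characteristic $p$, where $X^p-1=(X-1)^p$ and one inducts on degree using differentiation, then in characteristic $0$ via a specialization lemma. You instead argue directly on the minor $\Delta=\det(V|_{A\times B})$: expand each entry $\xi^{ab}$ in powers of $\xi-1$, use a Hall-type rank argument to show $C_t=0$ for $t<\binom{k}{2}$, and compute $C_{\binom{k}{2}}=V_AV_B/(0!1!\cdots(k-1)!)$, a nonzero integer prime to $p$ because all Vandermonde differences and all factorial factors involved are integers of absolute value less than $p$; hence $v_{\mathfrak{p}}(\Delta)=\binom{k}{2}<\infty$ and $\Delta\neq 0$. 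The key computation is sound: for each permutation $\sigma$ only the leading monomials of the binomial columns survive, since any lower-degree choice of column degrees would again violate the Hall condition, and $\mathbb{Z}[\xi]/(\xi-1)\cong\mathbb{F}_p$ is the correct residue ring. This is the classical direct $p$-adic argument, in the spirit of the note of Frenkel that the authors cite but do not reproduce. Your route is shorter and self-contained, and yields the precise valuation $v_{\mathfrak{p}}(\Delta)=\binom{k}{2}$; the paper's route is indirect but is designed to make Chebotarev a corollary of the same uncertainty principle that drives the coding-theory discussion.
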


Let $R=\mathbb{C}[X]/(X^p-1)$.  Then $R$ is a vector space over
$\mathbb{C}$ with basis the images of the monomials $e_i=X^i$ for
$0\leq i\leq p-1$. 

(A multiple of) the Fourier transform on $\mathbb{Z}/p\mathbb{Z}$ can
be interpreted as the linear map
$\mathcal{F}\colon f\mapsto \widehat{f}$ from $R$ to $R$ such that
$$
\widehat{f} = \sum_{i=0}^{p-1} f(\xi^{-i})X^i \in R.
$$
It is elementary that the matrix representing this linear map is
$V' =(\xi^{-ij})_{i,j=0}^{p-1} \in M_p(\mathbb{C})$. Then each minor
of the matrix $V$ has a non-zero determinant if and only if the same
property holds for the matrix $V'$, so we may replace $V$ by $V'$ in
proving Chebotarev's Theorem.

We now show that Theorem~\ref{chebo} is \emph{equivalent} to the
uncertainty principle over $\mathbb{C}$. For a direct simple proof of
Chebotarev's Theorem, see the note~\cite{F} of Frenkel.

\begin{pro}
  Chebotarev's Theorem~\ref{chebo} is equivalent to the uncertainty
  principle for $\mathbb{Z}/p\mathbb{Z}$ over $\mathbb{C}$, i.e., to
  Theorem~\ref{up-abelian}.
\end{pro}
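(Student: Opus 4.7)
The plan is to translate between minors of the (Fourier) Vandermonde matrix $V'=(\xi^{-ij})$ and pairs of supports $(\supp(f),\supp(\widehat{f}))$, and show that each direction of the equivalence is a direct contrapositive.

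First I would set up the dictionary. For any subsets $A,B\subset\{0,\ldots,p-1\}$ with $|A|=|B|$, the minor $V'|_{B\times A}=(\xi^{-ij})_{j\in B,\,i\in A}$ is the matrix of the linear map
$$
(c_i)_{i\in A}\longmapsto\Bigl(\sum_{i\in A} c_i\xi^{-ij}\Bigr)_{j\in B},
$$
which sends a tuple of ``Fourier coefficients'' supported on $A$ to the values of the corresponding $\widehat{f}$ restricted to $B$. Under the identification of an $f\in R$ with the vector $(f(0),\ldots,f(p-1))$, this is exactly $(\widehat{f}(j))_{j\in B}$ for the unique $f$ with $\supp(f)\subseteq A$ and $f(i)=c_i$ on $A$.

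For the implication Chebotarev $\Rightarrow$ uncertainty principle, suppose for contradiction that some nonzero $f\in R$ satisfies $|\supp(f)|+|\supp(\widehat{f})|\leq p$. Set $A=\supp(f)$ and let $B$ be any subset of $\{0,\ldots,p-1\}\setminus\supp(\widehat{f})$ with $|B|=|A|$; such a $B$ exists because the complement of $\supp(\widehat{f})$ has size at least $p-|\supp(\widehat{f})|\geq |A|$. Then the column vector $(f(i))_{i\in A}$ is nonzero but is annihilated by $V'|_{B\times A}$, since $\widehat{f}(j)=0$ for every $j\in B$. This forces $\det(V'|_{B\times A})=0$, contradicting Theorem~\ref{chebo} (after passing from $V$ to $V'$, as already justified in the excerpt). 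Hence $|\supp(f)|+|\supp(\widehat{f})|\geq p+1$.

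For the converse, assume Theorem~\ref{up-abelian} and suppose, toward a contradiction, that some minor $V'|_{B\times A}$ with $|A|=|B|$ has vanishing determinant. Then its columns are linearly dependent, so there exist scalars $(c_i)_{i\in A}$, not all zero, with $\sum_{i\in A} c_i\xi^{-ij}=0$ for every $j\in B$. Define $f\in R$ by $f(i)=c_i$ for $i\in A$ and $f(i)=0$ otherwise; then $f\neq 0$, $|\supp(f)|\leq |A|$, and $\widehat{f}$ vanishes on $B$, so $|\supp(\widehat{f})|\leq p-|B|$. Adding these gives $|\supp(f)|+|\supp(\widehat{f})|\leq |A|+p-|B|=p$, which contradicts the uncertainty principle.

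The only substantive point is the correct identification of the minor $V'|_{B\times A}$ with the Fourier-evaluation map on functions supported on $A$; once this is in place, both directions are immediate contrapositives, so there is no real obstacle beyond bookkeeping the supports and the choice of the auxiliary set $B$.
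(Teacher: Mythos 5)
Your proof is correct and takes essentially the same route as the paper: identify the minor $V'|_{B\times A}$ with the restricted Fourier map on functions supported in $A$, then run each implication of the equivalence as a direct contrapositive with the choices $A=\supp(f)$ and $B$ inside the complement of $\supp(\widehat f)$ (and conversely). The only cosmetic differences are notational (the paper phrases the converse in terms of non-injectivity of $T_{A,B}$ rather than linear dependence of columns, which is the same thing).
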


\begin{proof}
  For each $A \subset \{0,\ldots,p-1 \}$, we denote by $\ell^2(A)$ the
  space of elements of $R$ which have zero coefficients for the basis
  vectors $e_i$ for $i\notin A$, i.e., polynomials $f$ with support
  contained in $A$. For an element
$$
f=\sum_i a_iX^i\in R
$$
we denote by $f|_A$ the element
$$
\sum_{i\in A}a_iX^i
$$
of $\ell^2(A)$.

For any two subsets $A$ and $B$ of $\{0,\ldots,p-1\}$ with the same
cardinality, the linear map $T_{A,B}\colon \ell^2(A)\to \ell^2(B)$
obtained by restricting the Fourier transform (i.e.,
$T_{A,B}(f)=\widehat{f}|_{B}$ for $f\in\ell^2(A)$) is represented by
the matrix $V'_{A\times B}$ with respect to the bases $(e_i)_{i\in A}$
and $(e_i)_{i\in B}$.

  (Theorem~\ref{chebo} $\Rightarrow$ Theorem~\ref{up-abelian}) Assume
  for contradiction that there exists a non-zero element
$$
f=\sum_{i=0}^{p-1}a_iX^i \in \mathbb{C}[X]
$$ 
such that $|\supp(f)| + |\supp(\widehat{f})| \leq p$. Let
$A=\supp(f)$. Since $|\supp(\widehat{f})|\leq p-|A|$, the complement
of $\supp(\widehat{f})$ has cardinality $\geq |A|$. We can therefore
find a subset $B$ of the complement of $\supp(\widehat{f})$ such that
$|B|=|A|$.  Let $T=T_{A,B}:\ell^2(A) \rightarrow \ell^2(B)$. We then
have $T(f) = \widehat{f}|_B = 0$ since $B$ is in the complement of the
support of $\widehat{f}$, but $f$ is non-zero in $\ell^2(A)$. Hence
$T$ is not invertible.  Hence, by the previous remark, the matrix
$V'_{A\times B}$ has determinant zero, which contradicts Chebotarev's
Theorem.

(Theorem~\ref{chebo} $\Leftarrow$ Theorem~\ref{up-abelian}) Now assume
that there exist subsets $A,B \subset \{0,\ldots,p-1\}$ with $|A|=|B|$
and $\det(V'|_{A \times B})=0$.  This means that the linear map
$T=T_{A,B}: l^2(A) \rightarrow l^2(B)$ is not invertible. In
particular, $T$ is not injective. Let $f\not=0$ be an element of
$\ell^2(A)$ such that $0=T(f)=\widehat{f}|_B$.  Then
$\supp(f) \subset A$ and $B$ is contained in the complement of the
support of $\widehat{f}$. Hence
$$
|\supp(f)| \leq |A| = |B| \leq p - |\supp(\widehat{f})|,
$$
which contradicts the uncertainty principle.
\end{proof}

In this argument, we may replace $\mathbb{C}$ with any other field $F$
containing a $p$-primitive root of unity $\xi$.  So for any prime $p$
and for any field $F$ containing a $p$-primitive root of unity $\xi$,
Theorem~\ref{chebo} with respect to the prime $p$ (i.e. the claim that
each minor of the $p \times p$ Vandermonde matrix $(\xi^{ij})_{i,j}$
is invertible) is equivalent to the uncertainty principle for the
field $F$ with respect to $p$, i.e., to the claim that
$\mu_{F,p} > p$.



\bigskip
\textsc{Shai Evra}, Einstein Institute of Mathematics, The Hebrew University of
Jerusalem, 91904, Jerusalem, Israel.
\par \textit{E-mail address:} \texttt{shai.evra@gmail.com}

\bigskip
\textsc{Emmanuel Kowalski}, ETH Z\"{u}rich - D-MATH, R\"{a}mistrasse 101, CH-8092 Z\"{u}rich, Switzerland.
\par \textit{E-mail address:} \texttt{kowalski@math.ethz.ch}

\bigskip
\textsc{Alexander Lubotzky}, Einstein Institute of Mathematics, The Hebrew University of
Jerusalem, 91904, Jerusalem, Israel. 
\par \textit{E-mail address:} \texttt{alex.lubotzky@mail.huji.ac.il}

\end{document}